\documentclass[preprint,10pt]{elsarticle}
\usepackage[top=2.2cm, bottom=2.2cm, left=2.2cm, right=2.2cm]{geometry}
\usepackage[T1]{fontenc}
\usepackage{amsmath,amssymb,amsthm}
\usepackage{bm}
\usepackage{mathrsfs}
\usepackage{graphicx}
\usepackage{enumitem}
\usepackage{appendix}
\usepackage{hyperref}
\newtheorem{theorem}{Theorem}
\newtheorem{lemma}[theorem]{Lemma}
\newtheorem{corollary}[theorem]{Corollary}
\newtheorem{proposition}[theorem]{Proposition}
\theoremstyle{definition}
\newtheorem{definition}[theorem]{Definition}
\newtheorem{remark}[theorem]{Remark}

\newcommand{\dd}{\,\mathrm{d}}
\newcommand{\ii}{\mathrm{i}}

\newcommand{\bbR}{\mathbb{R}}
\newcommand{\bbC}{\mathbb{C}}

\newcommand{\Nzero}{\mathbb{N}_{0}}
\newcommand{\HK}{\mathcal{H}_{\mathrm K}}
\newcommand{\Dom}{\operatorname{Dom}}

\newcommand{\vh}{\hat v}

\journal{Applied Mathematics Letters}
\begin{document}
\begin{frontmatter}

\title{King Function for Shifted Gaussian: Laguerre Structure, Spectral Theory and Density}

\author[thu]{Yanpeng Wang}

\author[thu,petri]{Zhe Gao\corref{cor1}}
\ead{gaozhe@tsinghua.edu.cn}

\cortext[cor1]{Corresponding author.}
\address[thu]{Department of Engineering Physics, Tsinghua University, Beijing 100084, China}
\address[petri]{Beijing Primal Energy Theory Research Institute, Beijing 100084, China}

\begin{abstract}
We study King function arising as radial kernels in the laboratory-frame spherical harmonic expansion of shifted Gaussian distributions. We first clarify their relation with the co-moving Laguerre hierarchy by means of a King--Laguerre expansion. We then derive the King differential equation and show that the associated self-adjoint operator in a Gaussian-weighted Hilbert space is unitarily equivalent to the free radial Schrödinger operator on the half-line. This yields the spectral representation and generalized eigenfunction. Finally, we prove that real-parameter King function, lies in the resolvent set, form a dense non-orthogonal system in a natural radial velocity space, providing an approximation-theoretic basis for King mixture representations. Weighted \(L^1\)-integrability criteria and closed-form moment formulas are also derived, justifying the normalization of King function.
\end{abstract}

\begin{keyword}
King function \sep Lenard--Bernstein operator \sep Sturm--Liouville theory \sep spectral theory \sep velocity-space approximation

% \MSC[2020] 34L10 \sep 47A10 \sep 34L05 \sep 35Q84 \sep 82C40
\end{keyword}
\end{frontmatter}

% ============================================================
\section{Introduction}
\label{sec:Introduction}
% ============================================================

The evolution of a charged-particle distribution function
\(f(\bm x,\bm v,t)\) under transport, electromagnetic forces and
collisions is commonly described by the Vlasov--Fokker--Planck
equation \cite{Risken1989, Villani2002, Thomas2012}.  
When spatial transport and macroscopic fields are neglected, the
collisional relaxation is governed by the velocity-space
Fokker--Planck equation \cite{Risken1989, Gardiner2009, Villani2002}
\begin{equation}
    \frac{\partial}{\partial t} f(\bm v, t)
    = \mathcal{C}[f] .
    \label{eq:FP-eq}
\end{equation}
Here \(\mathcal{C}\) is the Fokker--Planck collision operator written
in the canonical divergence form
\begin{equation}
    \mathcal{C}[f]
    = \nabla_{\bm v} \cdot 
      \Bigl[
          A(\bm v) f
          + D(\bm v) \cdot \nabla_{\bm v} f 
      \Bigr] ,
    \label{eq:FP-operator}
\end{equation}
where \(A(\bm v)\) and \(D(\bm v)\) are the velocity-space friction vector and diffusion tensor, respectively.  In this paper we focus on the reduced form of the collision operator, namely the Lenard--Bernstein operator \cite{Lenard1958}.

A standard way to represent three-dimensional velocity (3V) distributions is to separate the angular variables by means of spherical harmonics \cite{Arfken2013, Freeden1998}.  In spherical velocity coordinates
\(\bm v = (v,\theta,\phi)\) with \(v = |\bm v|\), any velocity distribution function that is square-integrable on the unit sphere \(\mathbb{S}^{2}\) for almost every \(v > 0 \) admits the
spherical-harmonic expansion, also known as the Laplace series \cite{Arfken2013},
\begin{equation}
    f(\bm v)
    = \sum_{l=0}^{\infty}
      \sum_{m=-l}^{l}
      f_l^m(v)\,
      \widetilde Y_l^m(\theta,\phi) 
    \label{eq:f-spherical-expansion}
\end{equation}
with convergence in \(L^2(\mathbb S^2)\). 
Throughout the paper we use the spherical harmonics without the factor
\(N_{l}^{m} = \sqrt{\frac{2l+1}{4\pi}\,\frac{(l-m)!}{(l+m)!}}\), denoted as
\begin{equation}
    \widetilde Y_l^m(\theta,\phi)
    = P_{l}^{m}(\cos\theta)\,e^{\ii m\phi},
    \label{eq:Ylm}
\end{equation}
where \(P_{l}^{m}(\cdot)\) denotes the associated Legendre function with the Condon--Shortley phase convention. With this unnormalized convention, the radial coefficients in
\eqref{eq:f-spherical-expansion} are given by
\begin{equation}
    f_{l}^{m}(v)
    = (N_{l}^{m})^{2}
      \int_{\mathbb{S}^{2}}
      f(v,\theta,\phi)\,
      \overline{\widetilde Y_{l}^{m}(\theta,\phi)}
      \,\dd\Omega,
    \qquad
    \dd\Omega = \sin\theta\,\dd\theta\,\dd\phi .
    \label{eq:f-radial-coefficient}
\end{equation}
Thus the 3V problem is reduced to the choice and analysis of suitable radial function \(f_l^m(v)\).

For near-equilibrium problems \cite{DolbeaultMouhotSchmeiser2015, HerauNier2004, Villani2002, Carlen1991}, the classical radial bases are Hermite and Laguerre functions \cite{Thangavelu1993, AskeyWainger1965}.  They arise naturally after setting the velocity origin to the shift velocity of a Gaussian.  More precisely, the Lenard--Bernstein operator becomes an Ornstein--Uhlenbeck generator in the co-moving variable \cite{Uhlenbeck1930, Villani2002}.  Hermite polynomials diagonalize this generator in co-moving Cartesian coordinates \cite{Risken1989, Thangavelu1993}, while Laguerre function diagonalize its radial part in co-moving spherical coordinates \cite{Szego1975}. 
% This framework is well established and extensively used in the analysis of Fokker--Planck type collision models.

However, this spectral picture is intrinsically tied to the co-moving frame.  In laboratory spherical coordinates, a Shifted Gaussian is no longer centered at the origin.  Consequently, laboratory spherical harmonics do not diagonalize the shifted Ornstein--Uhlenbeck operator.  Moreover, fixed-center low-order Hermite or Laguerre truncations may be inefficient for finite-shift components, high-energy tails and multi-peak distributions \cite{}. This motivates radial representations adapted to Shifted Gaussian profiles in the laboratory frame.

Motivated by this issue and by the Gaussian mixture model \cite{McLachlan2000, Bishop2006} (GMM), Wang and collaborators ~\cite{wang2025general, wang2024Aconservative} introduced the \emph{King mixture model} (KMM) for kinetic velocity distributions.  The basic radial building block is the \emph{King
function} \cite{wang2024Aconservative}.  In dimensionless form, the King function of order \(l\) with a real parameter \(k\) can be written as \cite{wang2024Aconservative}
\begin{equation}
    \mathcal{K}_{l}(v;k)
    \propto
      \frac{1}{\sqrt{v}}\,
      e^{-v^{2}}\,
      I_{l+1/2}(k v),
    \qquad
    k > 0,
    \label{eq:intro-Wang-King}
\end{equation}
where \(I_{l+1/2}(\cdot)\) is the modified Bessel function of the first kind \cite{Watson1944, NIST2010}.  Equivalently, up to normalization, the essential radial profile is
\begin{equation}
    e^{-v^{2}} i_{l}(k v), \qquad k > 0,
    \label{eq:intro-King-profile}
\end{equation}
where \(i_{l}(\cdot)\) is the modified spherical Bessel function.

Despite their practical motivation, the mathematical theory of King
function and KMM remains largely undeveloped.  The present work
addresses three basic gaps.

\begin{enumerate}
    \item [1)] \textbf{Representation gap.}
    Hermite and Laguerre function is well understood as co-moving
    spectral functions of the Ornstein--Uhlenbeck collision operator.
    The King function, in contrast, arises from the laboratory-frame
    spherical-harmonic expansion of a Shifted Gaussian.  A systematic
    account of the relation between these two representations is missing.

    \item [2)] \textbf{Operator-theoretic gap.}
    King function satisfies a natural second-order differential equation.  However, the associated Sturm--Liouville operator, its self-adjoint realization and its spectrum have not been systematically analyzed.

    \item [3)] \textbf{Approximation-theoretic gap.}
    The real-parameter King function used in KMM is not spectral
    eigenfunction of the corresponding self-adjoint operator.  Therefore
    their approximation power cannot be inferred directly from the
    spectral theorem.  A separate density analysis is needed as a
    first step toward a rigorous convergence theory for King mixture
    representations.
\end{enumerate}

The purpose of this paper is to fill these gaps.  We first clarify the
King--Laguerre connection by showing that a King function is an infinite
superposition of Laguerre radial mode.  We then introduce the King
differential expression and prove that, after a Gaussian-weighted
Liouville transform, the corresponding self-adjoint operator is unitarily
equivalent to the free radial Schrödinger operator on the half-line\cite{Zettl2005, Teschl2014}.  This gives the complete spectral structure of the King operator.  Finally, we
prove that the real-parameter King function forms a dense non-orthogonal
system in a natural radial Hilbert space.  These results provide an
analytic foundation for King function in non-equilibrium velocity-space representations.

The paper is organized as follows.  Section~\ref{sec:Preliminaries}
recalls the Ornstein--Uhlenbeck structure of the Lenard--Bernstein
operator, the associated Laguerre radial mode, and states the main
results.  Section~\ref{sec:King} derives the King function from the
laboratory-frame spherical harmonic expansion of a Shifted Gaussian and
proves the King--Laguerre relation.  Section~\ref{sec:KingOperator} develops the corresponding self-adjoint operator and its spectral representation.  Section~\ref{sec:Density} proves the density of the real-parameter King family.  Section~\ref{sec:Applications} discusses
consequences for King mixture representations, and Section~\ref{sec:Conclusion} concludes with open directions.

% ============================================================
\section{Preliminaries and main results}
\label{sec:Preliminaries}
% ============================================================

We recall the Ornstein--Uhlenbeck structure of the Lenard--Bernstein
operator and the corresponding Laguerre radial mode.  The material is
standard and included only to fix notation and normalization.

%--------------------------------------------------------------------
\subsection{Ornstein--Uhlenbeck structure}
\label{subsec:OU}

Under the assumptions of detailed balance, isotropy and linear
relaxation toward a shifted Gaussian, the Fokker--Planck collision
operator \eqref{eq:FP-operator} reduces to the Lenard--Bernstein
operator \cite{Lenard1958}
\begin{equation}
    \mathcal{C}_{\rm LB}[f]
    = \nu\,\nabla_{\bm v}\!\cdot\!\bigl[(\bm v-\bm u)f
       + D\,\nabla_{\bm v}f\bigr],
    \label{eq:LB}
\end{equation}
where \(\bm u \) denotes the shift velocity, \(\nu> 0 \) is the collision frequency, and \(D> 0 \) is the velocity-space diffusion coefficient. The shifted Gaussian
\begin{equation}
    M_{\bm u,\varsigma}(\bm v)
    = \frac{1}{\pi^{3/2}\varsigma^{3}}
      \exp\!\Bigl(-\frac{|\bm v-\bm u|^{2}}{\varsigma^{2}}\Bigr)
    \label{eq:fDM}
\end{equation}
is the equilibrium state: with \(D = \varsigma^{2}/2\), we have
\(\mathcal{C}_{\rm LB}[M_{\bm u,\varsigma}] = 0\).  In the case
\(\bm u = 0\), it reduces to the standard Gaussian, commonly referred to as the Maxwellian.

For a distribution of the form
\(f(\bm v) = M_{\bm u,\varsigma}(\bm v)\,\mathfrak{f}(\bm v)\)
with \(\mathfrak{f} = f/M_{\bm u,\varsigma}\), a direct calculation
gives
\begin{equation}
    M_{\bm u,\varsigma}^{-1}\,
    \mathcal{C}_{\rm LB}[M_{\bm u,\varsigma}\mathfrak{f}]
    = \nu\Bigl[\frac{\varsigma^{2}}{2}\Delta_{\bm v}\mathfrak{f}
            - (\bm v-\bm u)\!\cdot\!\nabla_{\bm v}\mathfrak{f}\Bigr].
    \label{eq:LB-conjugated-v}
\end{equation}
Introducing the co-moving dimensionless velocity
\(\widehat{\bm w} = (\bm v-\bm u)/\varsigma\) and noting that
\(M_{\bm u,\varsigma}(\bm v)\,\dd\bm v
 = \pi^{-3/2} e^{-|\widehat{\bm w}|^{2}}\dd\widehat{\bm w}\),
Eq.~\eqref{eq:LB-conjugated-v} simplifies to
\begin{equation}
    M_{\bm u,\varsigma}^{-1}\,
    \mathcal{C}_{\rm LB}[M_{\bm u,\varsigma}\mathfrak{f}]
    = \nu\mathcal{A}\mathfrak{f}(\widehat{\bm w}),
    \qquad
    \mathcal{A} := \frac12\Delta_{\widehat{\bm w}}
                 - \widehat{\bm w}\!\cdot\!\nabla_{\widehat{\bm w}}.
    \label{eq:OU}
\end{equation}
The operator \(\mathcal{A}\) is the Ornstein--Uhlenbeck generator with invariant Gaussian measure
\(\pi^{-3/2}e^{-|\widehat{\bm w}|^{2}}\dd\widehat{\bm w}\) \cite{Uhlenbeck1930, Metafune2002}.
Thus, while \(\mathcal{C}_{\rm LB}\) acts in the original velocity
\(\bm v\), its Gaussian-conjugated form acts on the relative
distribution \(\mathfrak{f}(\widehat{\bm w})\) as an
Ornstein--Uhlenbeck generator.

%--------------------------------------------------------------------
\subsection{Laguerre radial mode}
\label{subsec:Laguerre}

We diagonalize \(\mathcal{A}\) using spherical harmonics. In co-moving spherical coordinates
\(\widehat{\bm w} = (\widehat{w}, \theta_{w}, \phi_{w})\)
with \(\bm e_{\widehat{\bm w}} = \widehat{\bm w}/\widehat{w}\),
we separate variables by
\[
    \Upsilon(\widehat{\bm w})
    = R(\widehat{w})\,\widetilde Y_l^m(\bm e_{\widehat{\bm w}}) ~.
\]
Using
\(
    \Delta_{\mathbb{S}^{2}} \widetilde Y_l^m = -l(l+1)\widetilde Y_l^m
\),
the generator \(\mathcal A\) reduces to the radial operator \cite{Thangavelu1993, Szego1975}
\begin{equation}
    \mathcal{A}_{l} R
    = \frac12\Bigl[R''(\widehat{w}) + \frac{2}{\widehat{w}}R'(\widehat{w})
                 - \frac{l(l+1)}{\widehat{w}^{2}}R(\widehat{w})\Bigr]
      - \widehat{w} R'(\widehat{w}).
    \label{eq:OU-radial}
\end{equation}
The eigenvalue problem \(\mathcal{A}_{l}R = -\Lambda R\) reduces to the generalized Laguerre equation; polynomial solutions force \(\Lambda = l + 2p\) with \(p \in \Nzero\) (see, e.g., \cite{Szego1975, Thangavelu1993}).

The eigenvalue problem \(\mathcal{A}_{l}R = -\Lambda R\) reduces to the generalized Laguerre equation; polynomial solutions force \(\Lambda = l + 2p\) with \(p \in \Nzero\)  \cite{Szego1975, Thangavelu1993, GradshteynRyzhik2014}. Hence the spherical eigenfunction is
\begin{equation}
    \Upsilon_{p,l,m}(\widehat{\bm w})
    = R_{p,l}^{\mathrm{Lag}}(\widehat{w})\,
      \widetilde Y_l^m(\bm e_{\widehat{\bm w}}),
    \qquad
    R_{p,l}^{\mathrm{Lag}}(\widehat{w})
    = \widehat{w}^{\,l} L_{p}^{(l+1/2)}(\widehat{w}^{2}),
    \label{eq:Laguerre}
\end{equation}
and they satisfy
\[
    \mathcal{A}\Upsilon_{p,l,m} = -(l+2p)\,\Upsilon_{p,l,m}~.
\]
For each fixed \(l\), the family
\(\{R_{p,l}^{\mathrm{Lag}}\}_{p=0}^{\infty}\) forms an orthogonal basis of the weighted Hilbert space \cite{Thangavelu1993, AskeyWainger1965, Muckenhoupt1970}
\begin{equation}
    \mathscr{H}_{\rm rad}^{\rm co}
    = L^{2}\big((0,\infty);\,
               \widehat{w}^{2} e^{-\widehat{w}^{2}}\,\dd\widehat{w}\big) ~.
    \label{eq:Hrad-co}
\end{equation}

\begin{remark}[Limitation of the Laguerre representation]
\label{rem:Laguerre-limitation}
The Laguerre spectral decomposition diagonalizes \(\mathcal{A}\)
exclusively in the \emph{co-moving} variables
\(\widehat{\bm w} = (\bm v-\bm u)/\varsigma\).
In laboratory spherical coordinates
\[
    \hat{\bm v} = \bm v/\varsigma,\quad
    \hat{\bm u} = \bm u/\varsigma,\quad
    \hat{v} = |\hat{\bm v}|,\quad
    \hat{u} = |\hat{\bm u}|,
\]
with the polar axis along \(\hat{\bm u}\), a direct calculation gives
\[
    \mathcal{A}
    = \frac12\Bigl(\partial_{\hat{v}}^{2}
                 + \frac{2}{\hat{v}}\partial_{\hat{v}}
                 + \frac{1}{\hat{v}^{2}}\Delta_{\mathbb{S}^{2}}\Bigr)
      - \bigl(\hat{v} - \hat{u}\cos\theta\bigr)\partial_{\hat{v}}
      - \frac{\hat{u}\sin\theta}{\hat{v}}\,\partial_{\theta}.
\]
Thus the shifted term couples radial and angular variables.  Unless
\(\hat{\bm u}=0\), laboratory spherical harmonics do not diagonalize
\(\mathcal A\).  This motivates the laboratory-frame King representation.
\end{remark}

%--------------------------------------------------------------------
\subsection{Main results}
\label{subsec:main-results}

To state the main results we first introduce the central objects.
Throughout, we work with the dimensionless radial coordinate
\(\vh = v/\varsigma  \in \bbR_{> 0}\).

\begin{definition}[Complex-parameter King function]
\label{def:King-function-Complex}
For \(l \in \Nzero\), \(\varsigma  \in \bbR_{> 0} \) and \(k \in \mathbb{C}\),
the \emph{complex-parameter King function} is
\begin{equation}
    \mathcal{K}_{l}(\vh; k, \varsigma)
    :=
    A_l (k,\varsigma)\,
    e^{-\vh^{2}}\,
    i_{l}\!\bigl(k\vh\bigr),
    \label{eq:KingN-k-C}
\end{equation}
where the normalization factor
\begin{equation}
    A_l (k,\varsigma)
    :=
    \frac{1}{\pi^{3/2} \varsigma^{3}}\,
    (2l+1)
    e^{-(k/2)^{2}}.
    \label{AlN}
\end{equation}
The coefficient \(A_l \) satisfies the normalization condition of kinetic moments (see Proposition \ref{prop:moment}).
Two specializations are of primary interest.
\begin{itemize}
    \item \textbf{Imaginary branch} (\(k = \ii\kappa,\;
          \kappa \in \bbR_{\ge 0 }\)):
    up to a nonzero constant factor, this family yields the spectral mode
    \begin{equation}
        \Phi_{l,\kappa}(\vh) := e^{-\vh^{2}} j_{l}(\kappa\vh).
        \label{eq:KingN-kappa}
    \end{equation}
    
    \item \textbf{Real branch} (\(k \in \bbR_{> 0}\)):
    up to the \(\vh\)-independent prefactor \(A_l \), this reduces to
    the radial profile of a Shifted Gaussian,
    \begin{equation}
        \Psi_{l,k}(\vh) := e^{-\vh^{2}} i_{l}(k\vh).
        \label{eq:KingN-k}
    \end{equation}
\end{itemize}
\end{definition}

\begin{remark}[Zero-parameter limit and threshold renormalization]
\label{rem:King-k0}
From the small-argument asymptotics of the modified spherical Bessel
function,
\begin{equation}
    i_{l}(z) \sim \frac{z^{l}}{(2l+1)!!},
    \qquad z \to 0,
    \label{eq:King-k0-asymptotics}
\end{equation}
we obtain
\[
    \mathcal{K}_{l}(\vh;k,\varsigma)
    \sim
    A_l (k,\varsigma)\,
    \frac{k^{l}}{(2l+1)!!}\,
    \vh^{\,l} e^{-\vh^{2}},
    \qquad k \to 0.
\]
Consequently, for \(l \ge 1\) the King function vanishes in the
zero-shift limit, while for \(l = 0\) it reduces to the
non-Shifted Gaussian.  We record the zero-parameter case as
\begin{equation}
    \mathcal{K}_{l}(\vh;0,\varsigma)
    :=
    \delta_{l,0}\,
    \frac{1}{\pi^{3/2}\varsigma^{3}}\,
    e^{-\vh^{2}}.
    \label{eq:KingN-zero-mode}
\end{equation}
For the continuous spectrum, the analogous threshold behavior is
captured by the renormalized limit
\begin{equation}
    \mathcal{K}_{l}^{\mathrm{ren}}(\vh;0,\varsigma)
    :=
    \lim_{k \to 0^{+}}
    k^{-l}\,
    \mathcal{K}_{l}(\vh;k,\varsigma)
    =
    \frac{1}{\pi^{3/2}\varsigma^{3}}\,
    \frac{\vh^{\,l}}{(2l-1)!!}\,
    e^{-\vh^{2}} .
    \label{eq:KingN-renormalized}
\end{equation}
Note that \(\mathcal{K}_{l}^{\mathrm{ren}}\) represents only the
lowest-order non-vanishing term in the small-\(k\) asymptotics
and does not belong to \(\HK \) space for the continuous spectrum.
\end{remark}

\begin{definition}[King differential expression]
\label{def:tau}
For \(l \in \Nzero\) and \(\vh > 0\), the \emph{King
differential expression} is
\begin{equation}
    \tau_{l} f(\vh)
    = -\frac{1}{\omega(\vh)}
      \frac{\dd}{\dd\vh}
      \Bigl(
          \omega(\vh)\,\frac{\dd f}{\dd\vh}
      \Bigr)
    + V_{l}(\vh)\,f(\vh),
    \label{eq:tau}
\end{equation}
where \(\omega(\vh) = \vh^{2}e^{2\vh^{2}}\) and
\(V_{l}(\vh) = \frac{l(l+1)}{\vh^{2}} - (4\vh^{2}+6)\).
The associated Hilbert space is
\begin{equation}
    \HK = L^{2}\big((0,\infty);\,\omega(\vh)\,\dd\vh\big) ~. \label{eq:HK}
\end{equation} 
\end{definition}

\bigskip
\noindent The main results are presented as follows.
The first result bridges the laboratory and co-moving representations;
the next two concern the operator theory of \(\tau_{l}\); the last
establishes the approximation property of the real-parameter family.

\begin{theorem}[King--Laguerre expansion]
\label{thm:main-King-Laguerre}
For every \(l \in \Nzero\) and \(k \in \bbR_{> 0}\),
\begin{equation}
    \mathcal{K}_{l}(\vh; k, \varsigma)
    =
    M_{0,\varsigma}(\vh)\,
    \frac{\sqrt{\pi}}{2}
    (2l+1)
    \Bigl(\frac{k\vh}{2}\Bigr)^{\!l}
    \sum_{p=0}^{\infty}
    \frac{(-1)^{p} (k/2)^{2p}}{\Gamma(p+l+3/2)}\,
    L_{p}^{(l+1/2)}(\vh^{2}),
    \label{eq:King-Laguerre-expansion}
\end{equation}
where \(M_{0,\varsigma}(\vh)=\pi^{-3/2} \varsigma^{-3} e^{-\vh^{2}}\)
and \(L_{p}^{(\alpha)}(\cdot)\) is the
generalized Laguerre polynomial.
\end{theorem}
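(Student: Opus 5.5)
The plan is to recognize the series on the right-hand side of \eqref{eq:King-Laguerre-expansion} as a specialization of the classical Hille--Hardy-type generating function for generalized Laguerre polynomials with Bessel kernel,
\begin{equation*}
    \sum_{p=0}^{\infty}\frac{L_{p}^{(\alpha)}(x)\,t^{p}}{\Gamma(p+\alpha+1)}
    = e^{t}\,(xt)^{-\alpha/2}\,J_{\alpha}\bigl(2\sqrt{xt}\bigr),
    \qquad \alpha>-1,\; x\ge 0,\; t\in\bbC .
\end{equation*}
This identity is standard (Szeg\H{o}, Gradshteyn--Ryzhik). One proof multiplies the explicit finite sum $L_p^{(\alpha)}(x)=\sum_{j}\binom{p+\alpha}{p-j}(-x)^j/j!$ by $t^p/\Gamma(p+\alpha+1)$, interchanges the sums, and recognizes an exponential times the Bessel series. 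The right-hand side is an entire function of $t$, and the left-hand side converges absolutely for all complex $t$. So the identity, proved first for real $t>0$, extends to negative $t$ by analytic continuation.

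The key steps, in order, are as follows.

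First, I set $\alpha=l+1/2$, $x=\vh^{2}$ and $t=-(k/2)^{2}$. Then $\sqrt{xt}=\ii k\vh/2$, and $J_\alpha(\ii z)=\ii^{\alpha}I_\alpha(z)$. The phase factors $(\ii k\vh/2)^{-\alpha}\,\ii^{\alpha}$ cancel, which gives
\begin{equation*}
    \sum_{p=0}^{\infty}\frac{(-1)^{p}(k/2)^{2p}}{\Gamma(p+l+3/2)}\,L_{p}^{(l+1/2)}(\vh^{2})
    = e^{-(k/2)^{2}}\Bigl(\frac{k\vh}{2}\Bigr)^{-l-1/2} I_{l+1/2}(k\vh).
\end{equation*}
To avoid branch ambiguities, I would instead verify this directly from the power series. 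Both sides can be written as $e^{t}$ times an entire series in $xt$, so the identity can be checked coefficientwise.

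Second, I use $i_l(z)=\sqrt{\pi/(2z)}\,I_{l+1/2}(z)$ with $z=k\vh$. Multiplying by $(k\vh/2)^{l}$ leaves the factor $(k\vh/2)^{-1/2}\sqrt{2k\vh/\pi}=2/\sqrt\pi$. Multiplying further by $\frac{\sqrt\pi}{2}(2l+1)$ and by $M_{0,\varsigma}(\vh)=\pi^{-3/2}\varsigma^{-3}e^{-\vh^{2}}$ yields exactly
\begin{equation*}
    \pi^{-3/2}\varsigma^{-3}(2l+1)\,e^{-(k/2)^2}\,e^{-\vh^{2}}\,i_l(k\vh),
\end{equation*}
which equals $A_l(k,\varsigma)\,e^{-\vh^2}i_l(k\vh)=\mathcal K_l(\vh;k,\varsigma)$ by \eqref{eq:KingN-k-C}--\eqref{AlN}.

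The main obstacle is justifying the generating function, in particular the interchange of summations, and its validity at negative $t$. Since $|L_p^{(\alpha)}(x)|\le \binom{p+\alpha}{p}e^{x/2}$ for $x\ge0$ and $\alpha\ge0$, the series converges absolutely and locally uniformly in $(\vh,k)$. This justifies both the rearrangement and the pointwise equality for every $\vh>0$ and $k>0$. As a remark, I would also note that the expansion is consistent with the orthogonality in $\mathscr H^{\rm co}_{\rm rad}$, although it is not needed for the identity itself.
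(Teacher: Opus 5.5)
Your proposal is correct and follows the paper's proof. The paper uses the same Bessel--Laguerre generating function, written directly in the $I_\alpha$ form with $t=(k/2)^2$ and $(-t)^p$; this is your $J_\alpha$ identity after $t\mapsto -t$. It then converts $I_{l+1/2}$ to $i_l$ and multiplies by $(2l+1)\pi^{-3/2}\varsigma^{-3}e^{-\vh^{2}}$, exactly as you do. Your coefficientwise derivation of the generating function and the bound $|L_p^{(\alpha)}(x)|\le\binom{p+\alpha}{p}e^{x/2}$ supply the convergence justification that the paper only cites.
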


\begin{theorem}[Unitary equivalence and self-adjointness]
\label{thm:main-unitary}
The Liouville transformation
\[
    (\mathcal U f)(\vh)=\vh e^{\vh^2}f(\vh)
\]
maps \(\HK\) unitarily onto \(L^2(0,\infty)\), and on
\(C_c^\infty(0,\infty)\)
\[
    \mathcal U\tau_l\mathcal U^{-1}
    =
    h_l
    :=
    -\frac{\dd^2}{\dd\vh^2}
    +
    \frac{l(l+1)}{\vh^2}.
\]
For \(l\ge1\), the endpoints \(0\) and \(\infty\) are limit-point, and
the unique self-adjoint realization is
\[
    \mathcal L_l=\mathcal U^{-1}h_l\mathcal U,
    \qquad
    \Dom(\mathcal L_l)=\mathcal U^{-1}\Dom(h_l).
\]
For \(l=0\), the origin is limit-circle; choosing the regular Dirichlet
condition \(\psi(0)=0\), with \(\psi=\mathcal U f\), gives
\[
    \Dom(\mathcal L_0)
    =
    \left\{
        f\in\Dom_{\max}(\tau_0):
        \lim_{\vh\to0^+}\vh f(\vh)=0
    \right\}.
\]
With these domains, \(\mathcal L_l\) is self-adjoint on \(\HK\).
\end{theorem}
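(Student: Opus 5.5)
The argument has three parts: a unitarity check for \(\mathcal U\), a direct conjugation computation, and Weyl's limit-point/limit-circle classification applied to \(h_l\) on \((0,\infty)\), transported back to \(\HK\).

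\textbf{Unitarity.} For \(f\in\HK\) we have
\[
\int_0^\infty |\vh e^{\vh^2} f|^2\dd\vh=\int_0^\infty |f|^2\,\vh^2e^{2\vh^2}\dd\vh=\|f\|_{\HK}^2,
\]
so \(\mathcal U\) is an isometry. It is onto because \(\psi\mapsto \vh^{-1}e^{-\vh^2}\psi\) is an explicit inverse defined on all of \(L^2(0,\infty)\). Multiplication by \(\vh e^{\vh^2}\) is smooth and positive on \((0,\infty)\), so \(\mathcal U\) maps \(C_c^\infty(0,\infty)\) onto itself. It also preserves local absolute continuity of functions and their derivatives. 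Consequently \(\mathcal U\) carries the maximal and minimal domains of \(\tau_l\) onto those of \(h_l\).

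\textbf{Conjugation identity.} Write \(f=\rho^{-1}\psi\) with \(\rho=\vh e^{\vh^2}\), so that \(\omega=\rho^2\). Expanding \(-\rho\,\omega^{-1}(\omega(\psi/\rho)')'\) gives
\[
-\psi''+\frac{\rho''}{\rho}\,\psi .
\]
Here \(\rho'=(1+2\vh^2)e^{\vh^2}\) and \(\rho''=(6\vh+4\vh^3)e^{\vh^2}\), so \(\rho''/\rho=4\vh^2+6\). This term cancels exactly the \(-(4\vh^2+6)\) in \(V_l\), leaving \(-\psi''+l(l+1)\vh^{-2}\psi\), which is \(h_l\psi\). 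This is the routine step; I would only display \(\rho''/\rho\) and the cancellation.

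\textbf{Endpoint classification.} The solutions of \(h_l\psi=0\) are \(\vh^{l+1}\) and \(\vh^{-l}\).
\begin{itemize}
\item At \(0\): \(\vh^{-l}\) fails to be in \(L^2(0,1)\) for \(l\ge1\), so \(0\) is limit-point. For \(l=0\) both solutions, \(1\) and \(\vh\), are square-integrable, so \(0\) is limit-circle (indeed regular).
\item At \(\infty\): for \(\lambda=-1\), say, the equation \(h_l\psi=-\psi\) has solutions behaving like \(e^{\pm\vh}\), and the growing one is not in \(L^2(1,\infty)\). Equivalently, the potential is bounded below near \(\infty\), so the standard criterion applies. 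Hence \(\infty\) is limit-point for every \(l\).
\end{itemize}
By the Weyl alternative, for \(l\ge1\) the minimal operator of \(h_l\) has deficiency indices \((0,0)\). Its closure is then the unique self-adjoint realization, equal to the maximal operator. Unitary conjugation preserves closures, adjoints and self-adjointness, so \(\mathcal L_l=\mathcal U^{-1}h_l\mathcal U\) is the unique self-adjoint realization of \(\tau_l\).

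\textbf{The case \(l=0\) (main obstacle).} Here the deficiency indices are \((1,1)\), and self-adjoint realizations correspond to boundary conditions at \(0\). The Dirichlet condition \(\psi(0)=0\) is a separated, real condition at a regular endpoint, so the restriction of \(h_0^{\max}\) by \(\psi(0)=0\) is self-adjoint (the standard Glazman–Krein–Naimark/Sturm–Liouville result). The delicate point is justifying that \(\psi(0)\) exists for every \(\psi\in\Dom(h_0^{\max})\). For this I use that \(0\) is a regular endpoint of \(-\psi''\): the functions \(\psi\) and \(\psi'\) extend continuously to \(0\), because \(\psi''\in L^2(0,1)\subset L^1(0,1)\).

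Translating back, \(\psi(0)=\lim_{\vh\to0^+}\vh e^{\vh^2}f(\vh)=\lim_{\vh\to0^+}\vh f(\vh)\). This gives exactly the stated domain, and \(\mathcal L_0\) is self-adjoint on \(\HK\) by unitary equivalence.
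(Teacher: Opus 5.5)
Your proposal is correct and follows essentially the same route as the paper. Both arguments prove unitarity from $\omega=(\vh e^{\vh^2})^2$, obtain the conjugation by substituting $f=\vh^{-1}e^{-\vh^2}\psi$, and apply Weyl's limit-point/limit-circle classification to $h_l$ (Dirichlet at the origin for $l=0$), transporting the result back through $\mathcal U$; you also supply details the paper only asserts, namely the cancellation $\rho''/\rho=4\vh^2+6$, the explicit solutions $\vh^{l+1},\vh^{-l}$, the correspondence of maximal domains, and the existence of $\psi(0)$ for maximal-domain functions.
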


\begin{theorem}[Spectral resolution]
\label{thm:main-spectral}
For every \(l\in \Nzero\),
the self-adjoint operator \(\mathcal{L}_{l}\) has purely absolutely
continuous spectrum
\[
    \sigma(\mathcal{L}_{l})
    = \sigma_{\mathrm{ac}}(\mathcal{L}_{l})
    = [0,\infty),\qquad
    \sigma_{\mathrm{p}}(\mathcal{L}_{l})
    = \sigma_{\mathrm{sc}}(\mathcal{L}_{l})
    = \varnothing .
\]
The generalized eigenfunction \(\Phi_{l,\kappa}(\vh)\) defined by Eq.
\eqref{eq:KingN-kappa}, satisfies the \(\delta\)-orthogonality
\[
    \bigl\langle\Phi_{l,\kappa},\,
            \Phi_{l,\kappa'}\bigr\rangle_{\HK}
    = \frac{\pi}{2\kappa^{2}}\,\delta(\kappa-\kappa') ~.
\]
They are complete in \(\HK\), giving the King transform
\[
    f(\vh) = \int_{0}^{\infty}
             \widetilde{f}(\kappa)\,\Phi_{l,\kappa}(\vh)\,\dd\kappa,
    \qquad
    \widetilde{f}(\kappa)
    = \frac{2\kappa^{2}}{\pi}
      \bigl\langle f,\Phi_{l,\kappa}\bigr\rangle_{\HK} .
\]
\end{theorem}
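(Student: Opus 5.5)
The idea is to transport everything through the unitary map \(\mathcal U\) of Theorem~\ref{thm:main-unitary}. Since \(\mathcal L_l=\mathcal U^{-1}h_l\mathcal U\), the two operators have the same spectrum, the same spectral types and the same spectral measure classes. It therefore suffices to establish the spectral resolution of the free radial operator \(h_l\) on \(L^2(0,\infty)\), with the Dirichlet realization at \(0\) when \(l=0\), and then pull the result back. For \(h_l\) the classical tool is the Hankel (Fourier--Bessel) transform. For \(\kappa\ge0\) define
\[
u_{l,\kappa}(\vh)=\kappa\vh\,j_l(\kappa\vh)=\sqrt{\tfrac{\pi\kappa\vh}{2}}\,J_{l+1/2}(\kappa\vh).
\]
Each \(u_{l,\kappa}\) solves \(h_l u=\kappa^2 u\), behaves like \(\vh^{l+1}\) at the origin (for \(l=0\) this means it vanishes there, so it satisfies the boundary condition), and is bounded at infinity. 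The transform
\[
(\mathcal F_l\psi)(\kappa)=\sqrt{2/\pi}\int_0^\infty u_{l,\kappa}(\vh)\psi(\vh)\dd\vh
\]
is unitary on \(L^2(0,\infty)\): this is the Hankel inversion/Plancherel theorem, which I would quote (Watson, Titchmarsh) rather than reprove. It diagonalizes \(h_l\), since \(\mathcal F_l h_l\mathcal F_l^{-1}\) is multiplication by \(\kappa^2\) on \(L^2((0,\infty),\dd\kappa)\), with the correct domain because \(\Dom(h_l)\) is exactly the preimage of the maximal multiplication domain.

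From this unitary equivalence with multiplication by \(\kappa^2\), the spectral statements follow. The spectrum is the essential range, \([0,\infty)\). Under the change of variables \(\lambda=\kappa^2\), the spectral measure of each vector is absolutely continuous with respect to Lebesgue measure, so \(\sigma_{\mathrm{p}}=\sigma_{\mathrm{sc}}=\varnothing\). The absence of eigenvalues can also be checked directly. For \(\lambda<0\) the only solution regular at \(0\) grows exponentially at infinity. For \(\lambda\ge0\) the regular solution \(u_{l,\kappa}\) is oscillatory and not in \(L^2\), and at \(\lambda=0\) it equals \(\vh^{l+1}\), also not in \(L^2\).

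Next I would pull the eigenfunctions back. We have \(\mathcal U^{-1}u_{l,\kappa}=\vh^{-1}e^{-\vh^2}\kappa\vh\,j_l(\kappa\vh)=\kappa\,\Phi_{l,\kappa}\), so \(\Phi_{l,\kappa}=\kappa^{-1}\mathcal U^{-1}u_{l,\kappa}\) formally satisfies \(\tau_l\Phi_{l,\kappa}=\kappa^2\Phi_{l,\kappa}\). Because \(\mathcal U\) is unitary, \(\langle f,\Phi_{l,\kappa}\rangle_{\HK}=\kappa^{-1}\int\mathcal Uf\,u_{l,\kappa}\dd\vh\). The \(\delta\)-orthogonality comes from the classical closure relation
\[
\int_0^\infty u_{l,\kappa}u_{l,\kappa'}\dd\vh=\tfrac{\pi}{2}\delta(\kappa-\kappa'),
\]
which after dividing by \(\kappa\kappa'\) gives \(\frac{\pi}{2\kappa^2}\delta(\kappa-\kappa')\). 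The inversion formula \(\psi=\frac{2}{\pi}\int_0^\infty \langle\psi,u_{l,\kappa}\rangle u_{l,\kappa}\dd\kappa\), rewritten in terms of \(\Phi_{l,\kappa}\), yields \(f=\int\widetilde f(\kappa)\Phi_{l,\kappa}\dd\kappa\) with \(\widetilde f(\kappa)=\frac{2\kappa^2}{\pi}\langle f,\Phi_{l,\kappa}\rangle_{\HK}\). Completeness is the unitarity of \(\mathcal F_l\) combined with that of \(\mathcal U\).

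The main obstacle is making the distributional statements precise. The pairings \(\langle f,\Phi_{l,\kappa}\rangle_{\HK}\) exist pointwise only for, say, \(f\in\mathcal U^{-1}C_c^\infty(0,\infty)\) or \(\mathcal U f\in L^1\cap L^2\). I would first prove the transform and inversion on that dense set and then extend by continuity in the \(L^2\) sense, interpreting the \(\delta\)-orthogonality as the identity \(\int\!\!\int g(\kappa)\overline{h(\kappa')}\langle\Phi_{l,\kappa},\Phi_{l,\kappa'}\rangle\dd\kappa\dd\kappa'=\int g\bar h\,\frac{\pi}{2\kappa^2}\dd\kappa\) for test functions \(g,h\). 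For \(l=0\) the transform reduces to the Fourier sine transform, which also confirms that the Dirichlet choice of Theorem~\ref{thm:main-unitary} is the correct one.
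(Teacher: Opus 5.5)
Your proposal is correct and follows the same route as the paper. You reduce via the unitary map \(\mathcal U\) to the free radial operator \(h_l\) (Dirichlet at \(0\) for \(l=0\)) and invoke the spherical Hankel transform of order \(l+1/2\) for the spectrum, for the \(\delta\)-orthogonality \(\int_0^\infty \vh^2 j_l(\kappa\vh)\,j_l(\kappa'\vh)\dd\vh=\frac{\pi}{2\kappa^2}\delta(\kappa-\kappa')\), and for completeness, and you do so more carefully than the paper: you track the normalization through \(u_{l,\kappa}=\kappa\vh\,j_l(\kappa\vh)\), check the inversion constant \(2\kappa^2/\pi\) explicitly, and explain how the distributional identities are to be read on a dense subspace.
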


\begin{theorem}[Density of the real-parameter King family]
\label{thm:King-density}
Let 
\begin{equation}
    H_{\rm rad} = L^{2}\big((0,\infty);\,\vh^{2}e^{-\vh^{2}}\,\dd\vh\big) ~. \label{eq:Hrad}
\end{equation}
For every \(l \in \Nzero\) the family
\[
    \mathscr{K}_{l}^{\bbR}
    = \bigl\{ \Psi_{l,k}(\vh) = e^{-\vh^{2}} i_{l}(k\vh)
            \mid k \in \bbR_{> 0} \bigr\}
\]
has a dense linear span in \(H_{\rm rad}\).
\end{theorem}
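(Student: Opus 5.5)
We argue by duality: it suffices to show that any \(g\in H_{\rm rad}\) orthogonal to every \(\Psi_{l,k}\), \(k>0\), must vanish. First we check that \(\Psi_{l,k}\in H_{\rm rad}\). Since \(|i_l(z)|\le C e^{z}\) for \(z\ge0\), we have \(|\Psi_{l,k}|^{2}\vh^{2}e^{-\vh^{2}}\le C\vh^{2}e^{-3\vh^{2}+2k\vh}\), which is integrable.

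Now fix \(g\) with \(\langle g,\Psi_{l,k}\rangle_{H_{\rm rad}}=0\) for all \(k>0\), and set
\[
    F(k)=\int_0^\infty \overline{g(\vh)}\,\vh^{2}e^{-2\vh^{2}}\,i_l(k\vh)\,\dd\vh .
\]
Since \(i_l\) is entire with \(|i_l(z)|\le C e^{|\mathrm{Re}\,z|}\), and Cauchy--Schwarz gives \(\int|g|\vh^{2}e^{-2\vh^{2}}e^{|k|\vh}\,\dd\vh\le\|g\|_{H_{\rm rad}}\bigl(\int\vh^{2}e^{-3\vh^{2}+2|k|\vh}\,\dd\vh\bigr)^{1/2}<\infty\), the function \(F\) extends to an entire function of \(k\in\bbC\). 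Morera/Fubini justifies holomorphy locally uniformly. By hypothesis \(F\) vanishes on \((0,\infty)\), so \(F\equiv0\) on \(\bbC\).

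There are two routes to conclude.

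\emph{Route 1 (Taylor coefficients).} Expand \(i_l(z)=\sum_{n\ge0}\frac{z^{l+2n}}{2^{n}n!\,(2l+2n+1)!!}\) and exchange sum and integral; dominated convergence holds by the same exponential bound. The Taylor coefficients of \(F\) then give
\[
    \int_0^\infty\overline{g}\,\vh^{2}e^{-2\vh^{2}}\,\vh^{l+2n}\,\dd\vh=0,\qquad n\in\Nzero .
\]
In the variable \(t=\vh^{2}\) these are the moments \(\int_0^\infty h(t)\,t^{n}\,\dd\mu(t)=0\) with \(h=\overline{g}\,e^{-\vh^{2}}\vh^{l}\) and \(\dd\mu=t^{1/2}e^{-t}\dd t\)-type weight. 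The conclusion would follow from the density of polynomials in \(t\), that is, of the even-power system \(\{\vh^{l+2n}\}\) relative to the weight, in the appropriate \(L^{2}\) space. Here \(h\) lies in \(L^{2}\) with weight \(\vh^{2}e^{-\vh^{2}}\cdot e^{-2\vh^{2}}\)-adjusted factors. The weight \(e^{-t}\) on \((0,\infty)\) is determinate, since Carleman's condition holds for Laguerre-type weights. Hence polynomials in \(t\) are dense, so \(h=0\) a.e. and \(g=0\).

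\emph{Route 2 (imaginary axis), to avoid moment-problem subtleties.} Evaluate \(F\equiv0\) on \(k=\ii\kappa\), using \(i_l(\ii x)=\ii^{l}j_l(x)\). This gives
\[
    \int_0^\infty\overline{g}\,\vh^{2}e^{-2\vh^{2}}j_l(\kappa\vh)\,\dd\vh=0\qquad\text{for all }\kappa\ge0 .
\]
The left side is exactly \(\langle \tilde g,\Phi_{l,\kappa}\rangle_{\HK}\) with \(\tilde g=g\,e^{-3\vh^{2}}\). Since \(\tilde g\in\HK\) because \(|\tilde g|^{2}\omega=|g|^{2}\vh^{2}e^{-4\vh^{2}}\), the completeness of the King transform in Theorem~\ref{thm:main-spectral} (equivalently, injectivity of the Hankel transform of order \(l+1/2\) on \(L^{2}\)) yields \(\tilde g=0\), hence \(g=0\).

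The main obstacle is making this last step rigorous. Theorem~\ref{thm:main-spectral}'s transform is an \(L^{2}\) limit, whereas here the pairing is a pointwise absolutely convergent integral for each \(\kappa\). We will note that \(\vh\tilde g\,e^{\vh^{2}}\) is in \(L^{1}\cap L^{2}\), so the Hankel transform is given pointwise by the integral, and Plancherel then forces \(\tilde g=0\). I would present Route 2 as the main proof and keep Route 1 as a remark.
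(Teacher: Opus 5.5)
Your proof is correct, and your main route (Route 2) is genuinely different from the paper's.

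\textbf{Common start.} Both arguments begin the same way: the entire function $F$ built from an orthogonal $g$ vanishes on $(0,\infty)$, hence everywhere.

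\textbf{The paper's route.} It reads off the Taylor coefficients of $F$, using $a_{n,l}\neq0$, to obtain the moment identities $\int_0^\infty h\,\vh^{2n+l+2}e^{-2\vh^2}\dd\vh=0$. It then substitutes $x=\vh^2$ to form a finite complex measure with an exponential moment. Its Laplace transform vanishes on $\Re t<1$, so its Fourier transform vanishes, and the measure is zero.

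\textbf{Your route.} You continue $F$ itself to the imaginary axis. This needs only the global bound $|i_l(z)|\le e^{|\Re z|}$, which you state without proof but which follows at once from $i_l(z)=\tfrac12\int_{-1}^1 e^{zt}P_l(t)\dd t$. You then recognize $F(\ii\kappa)$ as, up to constants, the order-$(l+1/2)$ Hankel transform of $\psi=\vh e^{-2\vh^2}\overline{g}\in L^1\cap L^2$. Your treatment of the pointwise-integral versus $L^2$-limit issue is exactly what is needed to invoke Hankel--Plancherel: the kernel $\sqrt{x}J_{l+1/2}(x)$ is bounded, and you approximate in both norms.

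\textbf{Trade-off.} The paper's argument is more elementary and self-contained. It uses only Fourier uniqueness for finite measures and no spectral input. Yours skips the coefficient and moment bookkeeping but leans on the heavier Hankel--Plancherel theorem. In exchange, it shows that density of the real branch is the analytic continuation $k\mapsto\ii\kappa$ of completeness of the imaginary branch (Theorem~\ref{thm:main-spectral}). That is a structurally illuminating link between the spectral theory and the density theorem, which the paper's proof does not make visible.

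\textbf{Route 1.} This coincides with the paper up to the moment identities. It then closes with density of polynomials in $L^2(t^{1/2}e^{-t}\dd t)$, i.e.\ completeness of $L_p^{(1/2)}$, instead of the Laplace--Fourier argument. That also works. Your sketch should state the integrability explicitly: $h\in L^2(\dd\mu)$ because $\int|h|^2\dd\mu=2\int_0^\infty|g|^2\vh^{2l+2}e^{-3\vh^2}\dd\vh<\infty$.
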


\bigskip
\noindent
Theorem~\ref{thm:main-King-Laguerre} addresses the representation gap,
Theorems~\ref{thm:main-unitary}--\ref{thm:main-spectral} the
operator-theoretic gap, and Theorem~\ref{thm:King-density} the
approximation-theoretic gap identified in Section ~\ref{sec:Introduction}.
Proofs are given in Sections~\ref{sec:King}--\ref{sec:Density}.

% ============================================================
\section{King representation}
\label{sec:King}
% ============================================================

Instead of attempting to diagonalize the Ornstein--Uhlenbeck generator
in the laboratory frame, we
directly expand the Shifted Gaussian in spherical harmonics.  The
resulting radial coefficients is the \emph{King function}.  We also derive
the second-order differential equation satisfied by this function.

%--------------------------------------------------------------------
\subsection{Radial function in the laboratory frame}
\label{subsec:KingFun}

Let
\(
    v = |\bm v|,\;
    u = |\bm u|,\;
    \bm e_{\bm v} = \bm v/v,
\)
and, when \(u > 0\),
\(
    \bm e_{\bm u} = \bm u/u .
\)

\begin{proposition}[King function from the Shifted Gaussian]
\label{prop:King-kernel-from-Gaussian}
For \(u> 0 \), the shifted Gaussian
\(M_{\bm u,\varsigma}\) admits the spherical-harmonic expansion
\[
    M_{\bm u,\varsigma}(v,\theta,\phi)
    = 
      \sum_{l=0}^{\infty}
      \sum_{m=-l}^{l}
      \mathcal{K}_{l}(v;u,\varsigma)\,
      \frac{(l-m)!}{(l+m)!}\,
      \overline{\widetilde Y_l^m(\theta_{u},\phi_{u})}\,
      \widetilde Y_l^m(\theta,\phi),
\]
where \((\theta_u,\phi_u)\) are the angular coordinates of \(\bm e_{\bm u}\), and the \(l\)-th radial factor is the \emph{King function}
\begin{equation}
    \mathcal{K}_{l}(v;u,\varsigma)
    =
    \frac{1}{\pi^{3/2}\varsigma^{3}}
    (2l+1)
    \exp\!\Bigl(-\frac{u^{2}+v^{2}}{\varsigma^{2}}\Bigr)\,
    i_{l}\!\Bigl(\frac{2uv}{\varsigma^{2}}\Bigr),
    \qquad u > 0 .
    \label{eq:Kl}
\end{equation}
The case \(u=0\) is obtained by taking the limit \(u\to0^+\).
\end{proposition}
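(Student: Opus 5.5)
The plan is to factor the shifted Gaussian into a purely radial part and a plane-wave-type part, and then apply the Rayleigh-type expansion for modified spherical Bessel functions together with the addition theorem for spherical harmonics.

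\textbf{Step 1: factor out the radial part.} Write \(|\bm v-\bm u|^{2}=v^{2}+u^{2}-2uv\cos\gamma\), where \(\cos\gamma=\bm e_{\bm v}\cdot\bm e_{\bm u}\). This gives
\[
    M_{\bm u,\varsigma}(\bm v)
    =\frac{1}{\pi^{3/2}\varsigma^{3}}
     \exp\!\Bigl(-\frac{u^{2}+v^{2}}{\varsigma^{2}}\Bigr)\,
     e^{z\cos\gamma},
    \qquad z=\frac{2uv}{\varsigma^{2}}.
\]
All angular dependence now sits in the single factor \(e^{z\cos\gamma}\).

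\textbf{Step 2: expand the angular factor in Legendre polynomials.} I will use the classical expansion
\[
    e^{z\cos\gamma}=\sum_{l=0}^{\infty}(2l+1)\,i_{l}(z)\,P_{l}(\cos\gamma).
\]
It follows from the Rayleigh plane-wave expansion \(e^{\ii x\cos\gamma}=\sum_l(2l+1)\ii^{l}j_l(x)P_l(\cos\gamma)\) by setting \(x=-\ii z\) and using \(j_l(-\ii z)=(-\ii)^{l}i_l(z)\), so that \(\ii^{l}j_l(-\ii z)=i_l(z)\). Alternatively, one can check it directly: by Legendre orthogonality the coefficient is
\[
    \frac{2l+1}{2}\int_{-1}^{1}e^{zt}P_l(t)\dd t ,
\]
and this integral equals \((2l+1)\,i_l(z)\) by the standard integral representation of \(i_l\). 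Convergence is not an issue. For fixed \(z\) the function \(t\mapsto e^{zt}\) is entire, so the coefficients decay faster than geometrically, since \(i_l(z)\sim z^{l}/(2l+1)!!\) as \(l\to\infty\). The series therefore converges absolutely and uniformly in \(\gamma\), which is stronger than the \(L^{2}(\mathbb S^{2})\) convergence required in \eqref{eq:f-spherical-expansion}.

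\textbf{Step 3: apply the addition theorem.} In the unnormalized convention \eqref{eq:Ylm} the addition theorem reads
\[
    P_l(\cos\gamma)=\sum_{m=-l}^{l}\frac{(l-m)!}{(l+m)!}\,
    \overline{\widetilde Y_l^m(\theta_u,\phi_u)}\,\widetilde Y_l^m(\theta,\phi).
\]
This is the usual form \(\frac{4\pi}{2l+1}\sum_m Y_l^m\overline{Y_l^m}\) after substituting \((N_l^m)^{2}=\frac{2l+1}{4\pi}\frac{(l-m)!}{(l+m)!}\). The Condon--Shortley phase is harmless here, because it enters quadratically through \(P_l^{m}\,\overline{P_l^{m}}\). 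For negative \(m\), the identity \(P_l^{-m}=(-1)^{m}\frac{(l-m)!}{(l+m)!}P_l^{m}\) keeps the weights consistent.

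\textbf{Step 4: assemble.} Substituting Steps 2 and 3 into Step 1 produces the stated double sum with radial factor
\[
    \frac{1}{\pi^{3/2}\varsigma^{3}}(2l+1)\,
    \exp\!\Bigl(-\frac{u^{2}+v^{2}}{\varsigma^{2}}\Bigr)\, i_l(z),
\]
which is exactly \eqref{eq:Kl}. As a consistency check against Definition~\ref{def:King-function-Complex}, write \(\vh=v/\varsigma\) and \(k=2u/\varsigma\). Then \(kv/\varsigma=z\) and \(e^{-u^{2}/\varsigma^{2}}=e^{-(k/2)^{2}}\), so \eqref{eq:Kl} coincides with \(A_l\,e^{-\vh^{2}}i_l(k\vh)\).

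\textbf{Step 5: the limit \(u\to0^{+}\).} Here \(i_l(z)\to\delta_{l0}\), the \(l\ge1\) terms vanish, and the expansion reduces to \(M_{0,\varsigma}\). This matches \eqref{eq:KingN-zero-mode}. Passing the limit through the series is justified by dominated convergence, using the bound \(|i_l(z)|\le z^{l}e^{z}/(2l+1)!!\).

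The main obstacle is bookkeeping rather than analysis. The points needing care are the normalization of the addition theorem in the unnormalized convention and the sign and phase in \(j_l(-\ii z)=(-\ii)^{l}i_l(z)\). I would settle both by the direct integral computation for the Legendre coefficient in Step 2 and by checking the case \(l=1\) explicitly.
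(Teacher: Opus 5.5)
Your proof is correct and follows the paper's route. The paper factors out the radial Gaussian and applies the plane-wave expansion $e^{z\,\bm e_{\bm v}\cdot\bm e_{\bm u}} = 4\pi\sum_{l,m} i_l(z)(N_l^m)^2\widetilde Y_l^m(\bm e_{\bm v})\overline{\widetilde Y_l^m(\bm e_{\bm u})}$ in one step, which is your Steps 2--3 combined via $4\pi(N_l^m)^2=(2l+1)\frac{(l-m)!}{(l+m)!}$. Your convergence and dominated-convergence arguments for the $u\to0^+$ limit go beyond what the paper writes, and the only slip is cosmetic: $\int_{-1}^{1}e^{zt}P_l(t)\dd t = 2i_l(z)$, so it is the Legendre coefficient, not the integral itself, that equals $(2l+1)i_l(z)$.
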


\begin{proof}
For \(u > 0\).  From \eqref{eq:fDM} we have
\[
    M_{\bm u,\varsigma}(v,\bm e_{\bm v})
    =
    \frac{1}{\pi^{3/2}\varsigma^{3}}
    \exp\!\Bigl(-\frac{v^{2}+u^{2}}{\varsigma^{2}}\Bigr)
    \exp\!\Bigl(
        \frac{2uv}{\varsigma^{2}}\,
        \bm e_{\bm v}\!\cdot\!\bm e_{\bm u}
    \Bigr) .
\]
Applying the plane-wave expansion~\cite{Arfken2013, NIST2010}
 % \cite[Ch.~11]{Arfken2013}
  % \cite[\S10.60]{NIST2010}
   % \cite[Ch.~11]{Arfken2013}\cite[\S10.60]{NIST2010}
\begin{equation}
    e^{z\,\bm e_{\bm v}\cdot\bm e_{\bm u}}
    =
    4\pi 
      \sum_{l=0}^{\infty}
      \sum_{m=-l}^{l}
    i_{l}(z)\,(N_{l}^{m})^{2}\,
    \widetilde Y_l^m(\bm e_{\bm v})\,
    \overline{\widetilde Y_l^m(\bm e_{\bm u})},
    \label{eq:exp-Bessel-Legendre}
\end{equation}
with $z = 2uv/\varsigma^{2}$.  Using the explicit normalization
$4\pi (N_{l}^{m})^{2} = (2l+1)\frac{(l-m)!}{(l+m)!}$, we obtain
\begin{equation}
    M_{\bm u,\varsigma}(v,\bm e_{\bm v})
    =
    \frac{1}{\pi^{3/2}\varsigma^{3}}
    \exp\!\Bigl(-\frac{v^{2}+u^{2}}{\varsigma^{2}}\Bigr)
      \sum_{l=0}^{\infty}
      \sum_{m=-l}^{l}
    (2l+1)\,\frac{(l-m)!}{(l+m)!}\,
    i_{l}\!\Bigl(\frac{2uv}{\varsigma^{2}}\Bigr)\,
    \widetilde Y_l^m(\bm e_{\bm v})\,
    \overline{\widetilde Y_l^m(\bm e_{\bm u})}.
    \label{eq:fDM-Ylm}
\end{equation}
Writing the angular arguments explicitly,
\(\bm e_{\bm v}=(\theta,\phi)\) and \(\bm e_{\bm u}=(\theta_{u},\phi_{u})\),
Eq.~\eqref{eq:fDM-Ylm} becomes
\begin{equation}
    M_{\bm u,\varsigma}(v,\theta,\phi)
    =
    \frac{1}{\pi^{3/2}\varsigma^{3}}
      e^{-(v^{2}+u^{2})/\varsigma^{2}}
      \sum_{l=0}^{\infty}
      \sum_{m=-l}^{l}
      (2l+1)\,
      \frac{(l-m)!}{(l+m)!}\,
      i_{l}\!\Bigl(\frac{2uv}{\varsigma^{2}}\Bigr)\,
      \widetilde Y_l^m(\theta,\phi)\,
      \overline{\widetilde Y_l^m(\theta_{u},\phi_{u})}.
    \label{eq:fDM-Ylm-general}
\end{equation}
When the polar axis is aligned with the shift direction
(\(\theta_{u}=0\)), the identity
\(\overline{\widetilde Y_l^m} (0,\phi_{u}) = \delta_{m,0} \) collapses the sum
over \(m\) to the single term \(m=0\), recovering the expected
azimuthal symmetry.

Writing the spherical-harmonic expansion of the shifted Gaussian as
\[
    M_{\bm u,\varsigma}(v,\theta,\phi)
    = 
      \sum_{l=0}^{\infty}
      \sum_{m=-l}^{l}
    M_{l}^{m}(v)\,\widetilde Y_l^m(\theta,\phi),
\]
the radial coefficient of the \(l\)-th angular mode is
\begin{equation}
    M_{l}^{m}(v)
    = 
    \frac{(l-m)!}{(l+m)!}\,
    \overline{\widetilde Y_{l}^{m}(\bm e_{\bm u})}
    \;
    \mathcal K_{l}(v;u,\varsigma),
    \label{eq:Mlm-King}
\end{equation}
with \(\mathcal{K}_{l}\) given by \eqref{eq:Kl}.
  If the polar axis is aligned with the shift direction, then
\(
    \widetilde Y_l^m(0,\phi_u)= \delta_{m,0}~.
\)
So only the \(m=0\) term remains and the usual Legendre expansion is
recovered.  Finally, \(u=0\) follows from
\(i_0(0)=1\) and \(i_l(0)=0\) for \(l\ge1\).
\end{proof}

\begin{remark}[Reduction to the standard King function]
\label{rem:KlN-u}
Introducing the dimensionless variables
\(\vh = v/\varsigma\) and \(k = 2u/\varsigma\),
Eq.~\eqref{eq:Kl} becomes
\begin{equation}
    \mathcal{K}_{l}(\vh;k,\varsigma)
    =
    \frac{1}{\pi^{3/2}\varsigma^{3}}\,
    (2l+1)
    e^{-(k/2)^{2}}\,
    e^{-\vh^{2}}\,
    i_{l}(k\vh),
    \qquad k \in \bbR_{> 0}.
    \label{eq:KlN-u}
\end{equation}
This is precisely the real branch of the complex-parameter King
function introduced in Definition~\ref{def:King-function-Complex}.
\end{remark}

%--------------------------------------------------------------------
\subsection{Proof of Theorem~\ref{thm:main-King-Laguerre}}
\label{subsec:proof-King-Laguerre}

\begin{proof}
Recall the Bessel-Laguerre generating identity~\cite{NIST2010, GradshteynRyzhik2014}
 % \cite[\S18.12]{NIST2010}\cite[Eq.~8.976 or related formulas]{GradshteynRyzhik2014}
\begin{equation}
    e^{-t}(xt)^{-\alpha/2} I_{\alpha}(2\sqrt{xt})
    = \sum_{p=0}^{\infty}
      \frac{(-t)^{p}}{\Gamma(p+\alpha+1)}\,
      L_{p}^{(\alpha)}(x),
    \qquad t \in \mathbb{C},
    \label{eq:Bessel-Laguerre-identity}
\end{equation}
valid as an identity of entire functions in \(t\) for each fixed
\(x> 0 \).  Set \(\alpha = l+1/2\), \(x = \vh^{2}\) and
\(t = (k/2)^{2}\), so that \(2\sqrt{xt} = k\vh\).
Using the relation
\(i_{l}(z) = \sqrt{\pi/(2z)}\,I_{l+1/2}(z)\), the left-hand side of
\eqref{eq:Bessel-Laguerre-identity} becomes
\begin{align*}
    e^{-(k/2)^{2}}
    \bigl[\vh^{2}(k/2)^{2}\bigr]^{-(l+1/2)/2}
    I_{l+1/2}(k\vh)
    % &=
    % e^{-(k/2)^{2}}\,
    % (\vh k/2)^{-(l+1/2)}\,
    % \sqrt{\frac{2}{\pi}}\,(k\vh)^{1/2}\,
    % i_{l}(k\vh) \\[4pt]
    &=
    e^{-(k/2)^{2}}\,i_{l}(k\vh)\;
    \sqrt{\frac{2}{\pi}}\;
    2^{\,l+1/2}\,
    (k\vh)^{-l}.
\end{align*}
Equating this with the right-hand side of
\eqref{eq:Bessel-Laguerre-identity} and solving for the
\(\vh\)-dependent factor yields
\begin{equation}
    e^{-(k/2)^{2}}\,i_{l}(k\vh)
    = \frac{\sqrt{\pi}}{2}
      \Bigl(\frac{k \vh}{2} \Bigr)^{l}
      \sum_{p=0}^{\infty}
      \frac{(-1)^{p}(k/2)^{2p}}{\Gamma(p+l+3/2)}\,
      L_{p}^{(l+1/2)}(\vh^{2}).
    \label{eq:King-raw-expansion}
\end{equation}
Absolute convergence of the series for every \(\vh> 0 \) follows from
the standard estimates for the Bessel-Laguerre identity \cite{Szego1975, NIST2010}.
Substituting the above equation into \eqref{eq:KlN-u}
yields the King--Laguerre relation \eqref{eq:King-Laguerre-expansion}.
\end{proof}

\begin{remark}[Relation between King and Laguerre functions]
\label{rem:King-Laguerre-interpretation}
Formula \eqref{eq:King-raw-expansion} shows that the King function
is not a single Laguerre mode. It is an infinite superposition of
Laguerre radial mode with coefficients controlled by the parameter \(k\).
Thus Laguerre functions provide the co-moving spectral hierarchy, whereas King functions
represent shifted Gaussian structures in the laboratory frame.
This dictionary between the two representations is fundamental for the subsequent spectral analysis.
\end{remark}

% ============================================================
\section{The King operator and its spectral representation}
\label{sec:KingOperator}
% ============================================================

In this section we prove Theorems~\ref{thm:main-unitary}
and~\ref{thm:main-spectral}.  We begin by deriving the second-order differential equation satisfied by the King functions,
then construct the associated self-adjoint operator and its spectral resolution.

%--------------------------------------------------------------------
\subsection{The King equation}
\label{subsec:King-eq}

\begin{proposition}[King equation]
\label{prop:King-equation}
For every \(l \in \Nzero\) and \(k \in \mathbb{C}\), the
function \( e^{-\vh^{2}} i_{l}(k\vh)\) satisfies the following second-order linear ordinary differential equation
\begin{equation}
    f''(\vh)
    + \Bigl( \frac{2}{\vh} + 4\vh \Bigr) f'(\vh)
    + \Bigl[ -\frac{l(l+1)}{\vh^{2}} + 4\vh^{2} + 6 - k^{2} \Bigr] f(\vh)
    = 0 .
    \label{eq:Kingeq}
\end{equation}
We refer to \eqref{eq:Kingeq} as the \emph{King equation}.
\end{proposition}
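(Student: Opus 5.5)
The plan is to reduce the King equation to the modified spherical Bessel equation by factoring out the Gaussian. Write \(f(\vh)=e^{-\vh^{2}}g(\vh)\) with \(g(\vh)=i_{l}(k\vh)\). The one input I need is the standard fact that \(i_l\) satisfies
\[
    z^{2}i_l''(z)+2z\,i_l'(z)-\bigl[z^{2}+l(l+1)\bigr]i_l(z)=0 .
\]
By the chain rule, \(g\) then satisfies
\[
    g''+\frac{2}{\vh}g'-\Bigl[\frac{l(l+1)}{\vh^{2}}+k^{2}\Bigr]g=0 .
\]
This holds for all complex \(k\). Indeed, \(i_l(z)=z^{l}\,\phi_l(z^{2})\) with \(\phi_l\) entire, so \(i_l\) is entire, and the Bessel ODE holds on all of \(\bbC\) by analytic continuation from the real axis. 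The degenerate case \(k=0\) is harmless: the ODE for \(g\) still holds there, with \(g\equiv 1\) if \(l=0\) and \(g\equiv 0\) if \(l\ge1\).

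Next I would compute the derivatives of \(f\):
\[
    f'=e^{-\vh^{2}}\bigl(g'-2\vh g\bigr),\qquad
    f''=e^{-\vh^{2}}\bigl(g''-4\vh g'+(4\vh^{2}-2)g\bigr).
\]
Substituting these into the first two terms of \eqref{eq:Kingeq}, the \(g'\) contributions \(-4\vh g'\) and \(+4\vh g'\) cancel. The remaining \(g\) terms are \((4\vh^{2}-2)g-4g-8\vh^{2}g=-(4\vh^{2}+6)g\), so
\[
    f''+\Bigl(\frac{2}{\vh}+4\vh\Bigr)f'
    = e^{-\vh^{2}}\Bigl[g''+\frac{2}{\vh}g'-(4\vh^{2}+6)g\Bigr].
\]
Adding the potential term \(\bigl[-l(l+1)/\vh^{2}+4\vh^{2}+6-k^{2}\bigr]f\), the \(\pm(4\vh^{2}+6)\) terms cancel. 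The left side of \eqref{eq:Kingeq} therefore equals
\[
    e^{-\vh^{2}}\Bigl[g''+\frac{2}{\vh}g'-\Bigl(\frac{l(l+1)}{\vh^{2}}+k^{2}\Bigr)g\Bigr],
\]
which vanishes by the Bessel equation for \(g\).

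There is no real obstacle here; the only points needing care are the sign bookkeeping in \(f''\) and the justification for complex \(k\). As a consistency check, I would note that \(\omega(\vh)=\vh^{2}e^{2\vh^{2}}\) gives \(\omega'/\omega=2/\vh+4\vh\). Hence \eqref{eq:Kingeq} is exactly \(\tau_l f=-k^{2}f\) with \(\tau_l\) as in Definition~\ref{def:tau}. For \(k=\ii\kappa\) this becomes \(\tau_l\Phi_{l,\kappa}=\kappa^{2}\Phi_{l,\kappa}\), which is consistent with the role of \(\Phi_{l,\kappa}\) in Theorem~\ref{thm:main-spectral}.
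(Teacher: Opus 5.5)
Your proof is correct and takes essentially the same route as the paper. Both write \(f=e^{-\vh^{2}}i_l(k\vh)\), differentiate directly, and reduce the left-hand side of \eqref{eq:Kingeq} to \(e^{-\vh^{2}}\vh^{-2}\) times the modified spherical Bessel expression in \(z=k\vh\). Your steps of first converting that equation into an ODE for \(g(\vh)=i_l(k\vh)\), and of justifying complex \(k\) via entireness of \(i_l\) and checking \(k=0\), are only slightly more careful bookkeeping of the same computation.
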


\begin{proof}
Set \(z = k\vh \in \bbC \).  The modified spherical Bessel function satisfies \cite[\S10.49]{NIST2010} \cite[Ch.~3]{Watson1944}
\begin{equation}
    z^{2} i_{l}''(z) + 2z i_{l}'(z)
    - \bigl[ z^{2} + l(l+1) \bigr] i_{l}(z) = 0 .
    \label{eq:mod-sphere-Bessel}
\end{equation}
With \(f(\vh) = e^{-\vh^{2}} i_{l}(k\vh)\) we compute
\begin{align*}
    f'(\vh)  &= e^{-\vh^{2}} \bigl[ k i_{l}'(z) - 2\vh i_{l}(z) \bigr], \\
    f''(\vh) &= e^{-\vh^{2}} \bigl[ k^{2} i_{l}''(z) - 4k\vh i_{l}'(z)
                + (4\vh^{2} - 2) i_{l}(z) \bigr].
\end{align*}
Substituting into the left-hand side of \eqref{eq:Kingeq} and
simplifying,
\begin{align*}
    &f''(\vh) + \Bigl(\frac{2}{\vh} + 4\vh\Bigr) f'(\vh)
      + \Bigl[ -\frac{l(l+1)}{\vh^{2}} + 4\vh^{2} + 6 - k^{2} \Bigr] f(\vh) \\
    &= e^{-\vh^{2}} \Bigl\{
          k^{2} i_{l}''(z)
          + \frac{2k}{\vh} i_{l}'(z)
          - \Bigl[ k^{2} + \frac{l(l+1)}{\vh^{2}} \Bigr] i_{l}(z)
        \Bigr\} \\
    &= \frac{e^{-\vh^{2}}}{\vh^{2}} \Bigl\{
          z^{2} i_{l}''(z) + 2z i_{l}'(z)
          - \bigl[ z^{2} + l(l+1) \bigr] i_{l}(z)
        \Bigr\}.
\end{align*}
The bracket vanishes identically by \eqref{eq:mod-sphere-Bessel}. Hence \eqref{eq:Kingeq} holds.
\end{proof}

%--------------------------------------------------------------------
\subsection{Unitary equivalence and self-adjointness}
\label{subsec:unitary}

To place the King equation in an operator-theoretic framework, we
rewrite it in Sturm-Liouville form.  Multiplying
\eqref{eq:Kingeq} by the integrating factor
\(\exp\!\int\!({2}/{\vh}+4\vh)\,\dd\vh = \vh^{2} e^{2\vh^{2}}\)
yields the differential expression \(\tau_{l}\) introduced in
Definition~\ref{def:tau}, which we recall for convenience:
\begin{equation}
    \tau_{l} f(\vh)
    = -\frac{1}{\omega(\vh)}
      \frac{\dd}{\dd\vh}
      \Bigl( \omega(\vh)\,\frac{\dd f}{\dd\vh} \Bigr)
    + V_{l}(\vh)\,f(\vh),
    \label{eq:tau-recall}
\end{equation}
with \(\omega(\vh) = \vh^{2}e^{2\vh^{2}}\) and
\(V_{l}(\vh) = {l(l+1)}/{\vh^{2}} - (4\vh^{2}+6)\).
The natural Hilbert space is
\(\HK = L^{2}\big((0,\infty); \, \omega(\vh) \, \dd\vh\big)\), defined by Eq. \eqref{eq:HK} .

\begin{lemma}[Unitary reduction to the radial Schrödinger operator]
\label{lem:unitary}
The Liouville transformation
\[
    (\mathcal{U}f)(\vh)= \vh\,e^{\vh^{2}} f(\vh)
\]
maps \(\HK\) unitarily onto \(L^{2}(0,\infty)\) and
satisfies
\[
    \mathcal{U}\,\tau_{l}\,\mathcal{U}^{-1}
    = h_{l}
    = -\frac{\dd^{2}}{\dd\vh^{2}} + \frac{l(l+1)}{\vh^{2}}
    \quad\text{on } C_{c}^{\infty}(0,\infty).
\]
The right-hand side is the radial half-line Schrödinger operator\cite{Zettl2005, Teschl2014} on $L^{2}(0,\infty)$.
\end{lemma}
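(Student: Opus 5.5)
The proof splits into two independent parts: an isometry computation for $\mathcal U$, and a conjugation identity for $\tau_l$ on test functions.

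\textbf{Unitarity.} For $f\in\HK$ we have
\[
\int_0^\infty |(\mathcal U f)(\vh)|^2\dd\vh=\int_0^\infty \vh^2e^{2\vh^2}|f(\vh)|^2\dd\vh=\|f\|_{\HK}^2,
\]
so $\mathcal U$ is an isometry from $\HK$ into $L^2(0,\infty)$. The map $\mathcal U$ is multiplication by the positive, smooth, nowhere-vanishing function $\rho(\vh):=\vh e^{\vh^2}=\omega^{1/2}$. Hence $(\mathcal U^{-1}\psi)(\vh)=\rho(\vh)^{-1}\psi(\vh)$ is defined for every $\psi\in L^2(0,\infty)$, and by the same identity it lies in $\HK$. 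So $\mathcal U$ is surjective, hence unitary.

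\textbf{Conjugation identity.} I will compute $\mathcal U\tau_l\mathcal U^{-1}\psi$ directly for $\psi\in C_c^\infty(0,\infty)$. Multiplication by $\rho^{\pm1}$ preserves $C_c^\infty(0,\infty)$, so every expression is well defined. Put $f=\psi/\rho$. The standard Liouville identity for $p=\omega$ and weight $\omega$ gives
\[
\rho\cdot\frac{1}{\omega}\bigl(\omega(\psi/\rho)'\bigr)'=\psi''-\frac{\rho''}{\rho}\,\psi ,
\]
because $\omega/\rho=\rho$. To verify it, expand $\rho(\psi/\rho)'=\psi'-\rho'\psi/\rho$. Differentiating gives $\bigl(\rho(\psi/\rho)'\bigr)'=\psi''-(\rho'/\rho)'\psi-(\rho'/\rho)\psi'$. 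Then
\[
\frac{1}{\rho}\bigl(\rho^2(\psi/\rho)'\bigr)'=\frac{1}{\rho}\Bigl[\rho'\bigl(\psi'-\rho'\psi/\rho\bigr)+\rho\bigl(\psi''-(\rho'/\rho)'\psi-(\rho'/\rho)\psi'\bigr)\Bigr].
\]
The $\psi'$ terms cancel, and the zeroth-order coefficient simplifies to $-\bigl((\rho'/\rho)^2+(\rho'/\rho)'\bigr)=-\rho''/\rho$.

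It therefore remains to compute $\rho''/\rho$. From $\log\rho=\log\vh+\vh^2$ we get $\rho'/\rho=1/\vh+2\vh$ and $(\rho'/\rho)'=-1/\vh^2+2$. Hence
\[
\frac{\rho''}{\rho}=\Bigl(\frac1\vh+2\vh\Bigr)^2-\frac1{\vh^2}+2=4\vh^2+6 .
\]
Consequently
\[
\mathcal U\tau_l\mathcal U^{-1}\psi=-\psi''+(4\vh^2+6)\psi+V_l\psi=-\psi''+\frac{l(l+1)}{\vh^2}\psi=h_l\psi .
\]
The $-(4\vh^2+6)$ term in $V_l$ is chosen precisely to cancel the Liouville potential.

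The only delicate point is bookkeeping in the Liouville potential; a sign slip there would leave a spurious harmonic term. I will cross-check the result against Proposition~\ref{prop:King-equation}. Take $\psi=\rho\,e^{-\vh^2}i_l(k\vh)=\vh\,i_l(k\vh)$. This should satisfy $-\psi''+\frac{l(l+1)}{\vh^2}\psi=-k^2\psi$, which is exactly the Riccati--Bessel equation for $\vh\,i_l(k\vh)$, confirming the identity.
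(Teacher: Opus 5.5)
Your proof is correct and follows the paper's route: the same norm computation for $\mathcal U$, and the substitution $f=\vh^{-1}e^{-\vh^{2}}\psi$ into $\tau_l$. Your general Liouville identity with $\rho''/\rho=4\vh^{2}+6$ spells out the ``direct computation'' that the paper leaves implicit. You also make surjectivity explicit via $\mathcal U^{-1}\psi=\psi/\rho$, which the paper's isometry identity alone does not state.
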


\begin{proof}
The unitarity follows from
\[
    \|\mathcal{U}f\|_{L^{2}(0,\infty)}^{2}
    = \int_{0}^{\infty}
      \bigl| \vh\,e^{\vh^{2}} f(\vh) \bigr|^{2} \dd\vh
    = \int_{0}^{\infty}
      |f(\vh)|^{2}\, \vh^{2} e^{2\vh^{2}} \dd\vh
    = \|f\|_{\HK}^{2}.
\]
For the operator identity, substitute
\(f(\vh) = \vh^{-1} e^{-\vh^{2}} \psi(\vh)\)
into \eqref{eq:tau-recall}.  A direct computation gives
\[
    \tau_{l} f
    = \vh^{-1} e^{-\vh^{2}}
      \Bigl[ -\psi'' + \frac{l(l+1)}{\vh^{2}}\,\psi \Bigr],
\]
hence \(\mathcal{U}\tau_{l}\,\mathcal{U}^{-1}\psi
      = -\psi'' + \frac{l(l+1)}{\vh^{2}}\,\psi\).
\end{proof}

To obtain a self-adjoint operator we must specify a domain.
Let \(\Dom_{\max}(\tau_{l})\) be the maximal domain.

\begin{definition}[Maximal domain]
\label{def:maximal-domain}
The maximal domain of the differential expression $\tau_{l}$ in $\HK$ is
\begin{equation}
    \Dom_{\max}(\tau_{l})
    := \Bigl\{
        f \in \HK
        \;\Big|\;
        f,\; \omega f' \in AC_{\mathrm{loc}}(0,\infty),\;
        \tau_{l} f \in \HK
    \Bigr\}.
    \label{eq:Dmax}
\end{equation}
\end{definition}

\begin{lemma}[Self-adjoint realization]
\label{lem:selfadjoint-domain}
For each \(l \in \Nzero\), define \(\mathcal{L}_{l} f := \tau_{l} f\)
with the following domain.
\begin{itemize}
    \item If \(l \ge 1\): \(\Dom(\mathcal{L}_{l}) = \Dom_{\max}(\tau_{l})\).
    \item If \(l = 0\):
          \(\Dom(\mathcal{L}_{0}) := \{ f \in \Dom_{\max}(\tau_{0})
             \mid \lim_{\vh\to 0^{+}} \vh\,f(\vh) = 0 \}\).
\end{itemize}
Then \(\mathcal{L}_{l}\) is self-adjoint on \(\HK\).
\end{lemma}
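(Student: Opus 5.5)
The strategy is to transfer everything through the unitary map \(\mathcal U\) of Lemma~\ref{lem:unitary} and to use standard Weyl limit-point/limit-circle theory for the radial Schrödinger operator \(h_l=-\dd^2/\dd\vh^2+l(l+1)/\vh^2\) on \(L^2(0,\infty)\).

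\textbf{Step 1: the maximal domains correspond.} I would first check that \(\mathcal U\) maps \(\Dom_{\max}(\tau_l)\) bijectively onto the maximal domain of \(h_l\),
\[
\Dom_{\max}(h_l)=\{\psi\in L^2:\ \psi,\psi'\in AC_{\mathrm{loc}},\ h_l\psi\in L^2\}.
\]
Write \(f=\vh^{-1}e^{-\vh^2}\psi\). The factor \(\vh^{-1}e^{-\vh^2}\) is smooth and nonvanishing on \((0,\infty)\). Hence \(f\) and \(\omega f'\) are locally absolutely continuous if and only if \(\psi\) and \(\psi'\) are. The pointwise identity \(\tau_l f=\vh^{-1}e^{-\vh^2}h_l\psi\) in Lemma~\ref{lem:unitary} only uses the differential expressions, so it holds a.e. for such functions, not just on \(C_c^\infty\). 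Combined with \(\|\cdot\|_{\HK}=\|\mathcal U\cdot\|_{L^2}\), this gives \(\tau_l f\in\HK\iff h_l\psi\in L^2\). Therefore \(\mathcal U\,\tau_{l,\max}\,\mathcal U^{-1}=h_{l,\max}\).

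\textbf{Step 2: endpoint classification.} For \(\lambda=0\), the solutions of \(h_l\psi=0\) are \(\vh^{l+1}\) and \(\vh^{-l}\).
\begin{itemize}
\item At \(\infty\), \(\vh^{l+1}\notin L^2(1,\infty)\), so \(\infty\) is limit-point for every \(l\). (One can equally cite that \(q\) is bounded below near \(\infty\).)
\item At \(0\), \(\vh^{-l}\in L^2(0,1)\) exactly when \(l=0\). So \(0\) is limit-point for \(l\ge1\) and limit-circle (indeed regular) for \(l=0\). Note that \(l=1\) is the borderline case \(l(l+1)=2\ge 3/4\), which is still limit-point.
\end{itemize}

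\textbf{Step 3: self-adjointness when \(l\ge1\).} Both endpoints are limit-point, so \(h_{l,\min}\) is essentially self-adjoint and \(h_{l,\max}=\overline{h_{l,\min}}\) is self-adjoint. Conjugating by the unitary \(\mathcal U\) preserves self-adjointness. Hence \(\mathcal L_l=\tau_{l,\max}\) is self-adjoint.

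\textbf{Step 4: the case \(l=0\).} Here \(h_0=-\psi''\) and the endpoint \(0\) is regular. Every \(\psi\in\Dom_{\max}(h_0)\) satisfies \(\psi''\in L^2(0,1)\), so \(\psi,\psi'\) extend continuously to \(0\). The Dirichlet realization
\[
\{\psi\in\Dom_{\max}:\psi(0)=0\}
\]
is self-adjoint; this is the standard separated boundary condition, and one can check directly that the Lagrange bracket \([\psi,\phi](0)=\overline{\psi(0)}\phi'(0)-\overline{\psi'(0)}\phi(0)\) vanishes exactly under this condition. Since \(\psi(0)=\lim_{\vh\to0^+}\vh e^{\vh^2}f(\vh)=\lim_{\vh\to0^+}\vh f(\vh)\), this domain pulls back to the stated \(\Dom(\mathcal L_0)\).

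The main point requiring care is Step 4. One must make sure the limit \(\lim_{\vh\to0^+}\vh f(\vh)\) exists for all \(f\in\Dom_{\max}(\tau_0)\), which follows from the regularity of \(0\) for \(h_0\). One must also confirm that the condition is of the correct Lagrange-bracket type, i.e.\ self-adjoint rather than merely symmetric. For this I would use the standard characterization of self-adjoint extensions with one limit-circle endpoint: deficiency indices \((1,1)\), and a domain of the form \(\{f\in\Dom_{\max}:[f,g](0)=0\}\) with \(g\) chosen as the Dirichlet solution.
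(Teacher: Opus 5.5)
Your proof is correct and follows the same route as the paper. You conjugate by \(\mathcal U\) to the radial operator \(h_l\), use Weyl's limit-point/limit-circle classification (limit-point at \(\infty\) for all \(l\) and at \(0\) for \(l\ge1\), limit-circle at \(0\) for \(l=0\)), and impose the Dirichlet condition \(\psi(0)=0\), i.e.\ \(\lim_{\vh\to0^+}\vh f(\vh)=0\), when \(l=0\). Two of your steps supply details the paper leaves implicit without changing the argument: Step 1, showing \(\mathcal U\) maps \(\Dom_{\max}(\tau_l)\) onto \(\Dom_{\max}(h_l)\) rather than only intertwining the expressions on \(C_c^\infty(0,\infty)\), and the explicit endpoint check via the solutions \(\vh^{l+1},\vh^{-l}\).
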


\begin{proof}
By Lemma~\ref{lem:unitary}, self-adjoint realizations of
\(\mathcal{L}_{l}\) correspond bijectively to those of the
half-line radial Schrödinger operator $h_{l}$ on $L^{2}(0,\infty)$. In Weyl's endpoint classification \cite{Zettl2005, Teschl2014},
infinity is limit-point for all \(l\).  At the origin:
\begin{itemize}
    \item \(l \ge 1\): limit-point — no boundary condition needed;
    \item \(l = 0\): limit-circle — we impose the Dirichlet condition
          \(\psi(0)=0\) for \(\psi = \mathcal{U}f\), which in the
          original variable reads \(\lim_{\vh\to 0^{+}}\vh\,f(\vh)=0\).
\end{itemize}
Since \(h_{l}\) with the above domain is self-adjoint and
\(\mathcal{U}\) is unitary, \(\mathcal{L}_{l} = \mathcal{U}^{-1}h_{l}\,\mathcal{U}\)
is self-adjoint on \(\HK\).
\end{proof}

\bigskip
\noindent Lemmas~\ref{lem:unitary} and~\ref{lem:selfadjoint-domain}
together establish Theorem~\ref{thm:main-unitary}.

%--------------------------------------------------------------------
\subsection{Spectrum and generalized eigenfunctions}
\label{subsec:spectrum}

We now determine the spectral resolution of \(\mathcal{L}_{l}\),
proving Theorem~\ref{thm:main-spectral}.

\begin{proposition}[Absolutely continuous spectrum]
\label{pro:spectrum}
For every \(l \in \Nzero\), the self-adjoint operator
\(\mathcal{L}_{l}\) has purely absolutely continuous spectrum
\[
    \sigma(\mathcal{L}_{l})
    = \sigma_{\mathrm{ac}}(\mathcal{L}_{l})
    = [0,\infty),\qquad
    \sigma_{\mathrm{p}}(\mathcal{L}_{l})
    = \sigma_{\mathrm{sc}}(\mathcal{L}_{l})
    = \varnothing .
\]
\end{proposition}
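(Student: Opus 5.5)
By Lemma~\ref{lem:unitary} and Lemma~\ref{lem:selfadjoint-domain}, \(\mathcal L_l=\mathcal U^{-1}h_l\,\mathcal U\) with \(\mathcal U\) unitary. Spectrum, spectral types and point spectrum are all invariant under unitary equivalence. It therefore suffices to prove the claim for the radial free Schrödinger operator \(h_l=-\dd^2/\dd\vh^2+l(l+1)/\vh^2\) on \(L^2(0,\infty)\), taken with the Dirichlet condition \(\psi(0)=0\) when \(l=0\). The plan is to diagonalize \(h_l\) explicitly with the Hankel (Fourier--Bessel) transform and read off every spectral statement from a multiplication operator.

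\textbf{Step 1 (Hankel transform).} Define, for \(\psi\in C_c^\infty(0,\infty)\),
\[
    (\mathcal F_l\psi)(\kappa)=\sqrt{\tfrac{2}{\pi}}\int_0^\infty \kappa\vh\, j_l(\kappa\vh)\,\psi(\vh)\dd\vh .
\]
This is the Riccati--Bessel form of the Hankel transform of order \(l+1/2\). The classical Hankel inversion and Plancherel theorem show that \(\mathcal F_l\) extends to a unitary operator on \(L^2(0,\infty)\) that is its own inverse. The kernel \(u_\kappa(\vh)=\kappa\vh\, j_l(\kappa\vh)\) solves \(-u''+l(l+1)\vh^{-2}u=\kappa^2u\), is regular at the origin (\(\sim\vh^{l+1}\)), and in particular satisfies the Dirichlet condition when \(l=0\).

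\textbf{Step 2 (diagonalization).} For \(\psi\in C_c^\infty(0,\infty)\), integrate by parts twice; the boundary terms vanish because of the compact support. This gives \(\mathcal F_l h_l\psi=\kappa^2\mathcal F_l\psi\), so \(\mathcal F_l h_l\mathcal F_l^{-1}\) agrees with multiplication by \(\kappa^2\) on a dense set.

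To pass to equality of the self-adjoint operators, note that \(C_c^\infty(0,\infty)\) is a core for \(h_l\) when \(l\ge1\), since both endpoints are limit-point. When \(l=0\), the self-adjoint operator is the Dirichlet Laplacian, for which \(C_c^\infty(0,\infty)\) is also a core: its closure has deficiency indices \((1,1)\) and is the minimal operator, and the Dirichlet extension is the one characterized by \(\psi(0)=0\). Alternatively, one can argue directly that \(\mathcal F_l\) maps \(\Dom(h_l)\) onto \(\{g:\kappa^2g\in L^2\}\) by comparing resolvents. In either case, \(h_l\) is unitarily equivalent to the maximal multiplication operator \(M_{\kappa^2}\) on \(L^2((0,\infty),\dd\kappa)\).

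\textbf{Step 3 (reading off the spectrum).} The spectrum of \(M_{\kappa^2}\) is the essential range of \(\kappa^2\), namely \([0,\infty)\). Its spectral measures are \(\langle g,E(B)g\rangle=\int_{\{\kappa^2\in B\}}|g(\kappa)|^2\dd\kappa\). Under the change of variables \(\lambda=\kappa^2\), these are absolutely continuous with respect to Lebesgue measure in \(\lambda\), since \(\dd\kappa=\dd\lambda/(2\sqrt\lambda)\). Hence the spectrum is purely absolutely continuous and \(\sigma_{\mathrm{sc}}=\varnothing\).

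For the point spectrum, an eigenvalue \(\lambda\) would require a level set \(\{\kappa^2=\lambda\}\) of positive measure, which never happens. As a direct cross-check, the solutions of \(-u''+l(l+1)\vh^{-2}u=\lambda u\) are oscillatory for \(\lambda>0\) and grow or are non-normalizable for \(\lambda\le0\). Transporting back by \(\mathcal U^{-1}\) gives the statement for \(\mathcal L_l\). It also identifies \(\mathcal U^{-1}u_\kappa=\kappa\,e^{-\vh^2}j_l(\kappa\vh)=\kappa\Phi_{l,\kappa}\) as the generalized eigenfunctions, which is used in Theorem~\ref{thm:main-spectral}.

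\textbf{Main obstacle.} The delicate point is Step 2, specifically the identification of domains: one must show that the operator diagonalized by \(\mathcal F_l\) is exactly the self-adjoint realization chosen in Lemma~\ref{lem:selfadjoint-domain}. This matters most for \(l=0\), where the limit-circle endpoint allows other self-adjoint extensions. My first choice is the resolvent comparison. Both \((h_l-z)^{-1}\) and \(\mathcal F_l^{-1}(\kappa^2-z)^{-1}\mathcal F_l\) are bounded, and both are inverses of \(h_l-z\) on the core. Alternatively, compute the Green's function \(-\,u_{\mathrm{reg}}(\vh_<)\,u_{\mathrm{out}}(\vh_>)/W\), built from the regular Riccati--Bessel solution and the decaying \(\vh\,k_l\)-type solution, and check that it equals the Hankel-side kernel. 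Since the resolvent of a self-adjoint operator determines the operator, this forces the two operators to coincide.
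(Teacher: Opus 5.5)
Your route is the paper's own: transport to $h_l$ by $\mathcal U$ and diagonalize with the Hankel transform of order $l+1/2$. The paper simply asserts that this transform diagonalizes the chosen realization. You rightly single out the domain identification as the real issue, and for $l\ge1$ your Step~2 settles it: both endpoints are limit-point, $C_c^\infty(0,\infty)$ is a core, and a self-adjoint operator has no proper self-adjoint extension.

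\textbf{The gap at $l=0$.} Your assertion that $C_c^\infty(0,\infty)$ is a core for the Dirichlet operator is false, and you state the reason yourself. The closure $h_{\mathrm{min}}$ of $h_0|_{C_c^\infty}$ has domain $\{\psi\in H^2(0,\infty):\psi(0)=\psi'(0)=0\}$ and deficiency indices $(1,1)$, so it is not self-adjoint. The Dirichlet domain $\{\psi\in H^2(0,\infty):\psi(0)=0\}$ is strictly larger. So Step~2 only shows that $A:=\mathcal F_0^{-1}M_{\kappa^2}\mathcal F_0$ is \emph{some} self-adjoint extension of $h_{\mathrm{min}}$. Which one matters for the statement: the Robin realization $\psi'(0)=\alpha\psi(0)$ with $\alpha<0$ has the eigenvalue $-\alpha^2$, with eigenfunction $e^{\alpha\vh}$.

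Your resolvent comparison, as phrased, has the same defect. Two bounded operators that both invert $h_0-z$ on $C_c^\infty(0,\infty)$ are only forced to agree on $\overline{(h_0-z)C_c^\infty(0,\infty)}=\ker(h_{\mathrm{max}}-\bar z)^{\perp}$. That is a closed subspace of codimension one, and it does not determine the operator.

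\textbf{How to close it.} Your Green's-function alternative does work. The solution $\sim\vh^{l+1}$ at the origin vanishes at $0$ when $l=0$, so the kernel built from it is the Dirichlet resolvent. You would still have to carry out the comparison with the Hankel-side kernel, though.

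A shorter repair is available. From $A\supseteq h_{\mathrm{min}}$ you get $A=A^*\subseteq h_{\mathrm{min}}^*=h_{\mathrm{max}}$. If $g\in\Dom(M_{\kappa^2})$, then $(1+\kappa^2)g\in L^2$, hence $g\in L^1$ by Cauchy--Schwarz. So $\psi=\mathcal F_0^{-1}g=\sqrt{2/\pi}\int_0^\infty\sin(\kappa\vh)\,g(\kappa)\dd\kappa$ is continuous with $\psi(0)=0$. Hence $A\subseteq h_D$, the Dirichlet realization corresponding to Lemma~\ref{lem:selfadjoint-domain}. Two self-adjoint operators one of which extends the other coincide, so $A=h_D$.

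With that fix, your Step~3 goes through as written and gives a proof that is more complete than the paper's.
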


\begin{proof}
By unitary equivalence (Lemma \ref{lem:unitary}), \(\sigma(\mathcal{L}_{l}) = \sigma(h_{l})\).
The operator \(h_{l}\) is the free radial Schrödinger operator on the half-line, diagonalized by the spherical Hankel transform\cite{Watson1944, NIST2010} of order \(l+1/2\).  Its spectrum is purely absolutely continuous and equals \([0,\infty)\).  Moreover, \(h_{l} \ge 0\) excludes negative
eigenvalues.  Positive-energy solutions oscillate at infinity and are
not square-integrable; the regular solution at \(\lambda=0\) is
likewise not square-integrable.  Hence there is no point spectrum.
\end{proof}

The generalized eigenfunctions correspond to \(\lambda = -k^{2} \ge 0 \). That is the imaginary
branch of Definition~\ref{def:King-function-Complex} (up to the
constant prefactor):
\begin{equation}
    \Phi_{l,\kappa}(\vh) := e^{-\vh^{2}} j_{l}(\kappa\vh),
    \qquad \kappa \in \bbR_{\ge 0}.
    \label{eq:KingN-kappa-recall}
\end{equation}
Since \(j_{l}(-\kappa\vh)\) is linearly dependent on
\(j_{l}(\kappa\vh)\), it suffices to consider \(\kappa \ge 0\).
In the sense of distributions on \((0,\infty)\) these functions
satisfy \(\mathcal{L}_{l}\Phi_{l,\kappa} = \kappa^{2}\Phi_{l,\kappa}\).

\begin{lemma}[Generalized orthogonality]
\label{lem:generalized-orthogonality}
For every \(l \in \Nzero\) and \(\kappa,\kappa' \in \bbR_{\ge 0}\), the generalized eigenfunctions satisfy the weighted $\delta$-orthogonality
\[
    \bigl\langle \Phi_{l,\kappa},\, \Phi_{l,\kappa'} \bigr\rangle_{\HK}
    = \frac{\pi}{2\kappa^{2}}\,\delta(\kappa-\kappa').
\]
\end{lemma}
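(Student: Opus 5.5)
The plan is to transport the computation to \(L^{2}(0,\infty)\) with the Liouville map \(\mathcal U\) of Lemma~\ref{lem:unitary}, and then invoke the classical Hankel-transform orthogonality of Riccati--Bessel functions. First, since \(\mathcal U\) is unitary,
\[
    \bigl\langle \Phi_{l,\kappa},\Phi_{l,\kappa'}\bigr\rangle_{\HK}
    =\int_{0}^{\infty} e^{-2\vh^{2}} j_{l}(\kappa\vh)\,j_{l}(\kappa'\vh)\,\vh^{2}e^{2\vh^{2}}\dd\vh
    =\int_{0}^{\infty} \vh^{2} j_{l}(\kappa\vh)\,j_{l}(\kappa'\vh)\dd\vh .
\]
Here the Gaussian factors cancel exactly. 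Equivalently, \(\mathcal U\Phi_{l,\kappa}=\vh\,j_{l}(\kappa\vh)=\kappa^{-1}\hat{\jmath}_{l}(\kappa\vh)\), where \(\hat{\jmath}_l(z)=z j_l(z)\) is the Riccati--Bessel function, i.e.\ the regular generalized eigenfunction of \(h_l\).

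Second, I would use the closure relation for Bessel functions. Write \(j_l(z)=\sqrt{\pi/(2z)}\,J_{l+1/2}(z)\) to get
\[
    \int_0^\infty \vh^{2} j_l(\kappa\vh)j_l(\kappa'\vh)\dd\vh
    =\frac{\pi}{2\sqrt{\kappa\kappa'}}\int_0^\infty \vh\,J_{l+1/2}(\kappa\vh)J_{l+1/2}(\kappa'\vh)\dd\vh .
\]
Then apply the Hankel inversion identity \(\int_0^\infty x J_\nu(\kappa x)J_\nu(\kappa' x)\dd x=\kappa^{-1}\delta(\kappa-\kappa')\), valid for \(\nu>-1\) and \(\kappa,\kappa'>0\). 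Because of the delta, \(\sqrt{\kappa\kappa'}\) may be replaced by \(\kappa\), which yields \(\frac{\pi}{2\kappa^{2}}\delta(\kappa-\kappa')\). The case \(\kappa=0\) or \(\kappa'=0\) is a measure-zero endpoint and can be excluded or interpreted by continuity.

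The main obstacle is making the distributional identity precise, since the integral does not converge absolutely. I would state it in the weak sense: for \(g\in C_c^\infty(0,\infty)\),
\[
    \int_0^\infty g(\kappa')\,\Bigl(\lim_{R\to\infty}\int_0^R \vh^2 j_l(\kappa\vh)j_l(\kappa'\vh)\dd\vh\Bigr)\dd\kappa'=\frac{\pi}{2\kappa^2}g(\kappa).
\]
This is the Hankel inversion theorem for order \(\nu=l+1/2\), a consequence of Plancherel for the Hankel transform as in Watson. Alternatively, one can compute the truncated integral via the Lommel/Wronskian identity \((\kappa^2-\kappa'^2)\int_0^R x^2 j_l j_l\,\dd x = R^2\bigl[\kappa' j_l(\kappa R)j_l'(\kappa' R)-\kappa j_l'(\kappa R)j_l(\kappa' R)\bigr]\). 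This identity follows from the King equation of Proposition~\ref{prop:King-equation} with \(k=\ii\kappa\), or directly from \(h_l\), since the boundary term at \(0\) vanishes for the regular solution. Inserting the large-argument asymptotics \(j_l(z)\sim \sin(z-l\pi/2)/z\) turns the boundary term into \(\frac{\sin((\kappa-\kappa')R)}{\kappa\kappa'(\kappa-\kappa')}\) plus terms that vanish weakly. The Dirichlet-kernel limit \(\sin((\kappa-\kappa')R)/(\kappa-\kappa')\to\pi\delta(\kappa-\kappa')\), together with a factor \(\tfrac12\) from the \(\kappa+\kappa'\) splitting, gives \(\frac{\pi}{2\kappa^2}\delta(\kappa-\kappa')\). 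The oscillatory \(\cos((\kappa+\kappa')R)\) terms tend to zero weakly by Riemann--Lebesgue. I would take this second route as the self-contained argument and cite Hankel inversion as confirmation.
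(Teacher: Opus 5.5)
Your proposal is correct and follows the paper's proof: the Gaussian factors cancel against $\omega$, which reduces the claim to the spherical Bessel closure relation $\int_0^\infty \vh^{2} j_l(\kappa\vh)j_l(\kappa'\vh)\dd\vh=\frac{\pi}{2\kappa^{2}}\delta(\kappa-\kappa')$; the paper only cites this relation, and your Lommel/Wronskian argument proves it. One small fix: in your weak formulation, take the limit $R\to\infty$ after integrating against $g(\kappa')$, because the truncated integral has no pointwise limit in $\kappa'$ (for $l=0$ it equals exactly $\frac{1}{2\kappa\kappa'}\bigl[\frac{\sin((\kappa-\kappa')R)}{\kappa-\kappa'}-\frac{\sin((\kappa+\kappa')R)}{\kappa+\kappa'}\bigr]$).
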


\begin{proof}
Using \(\omega(\vh) = \vh^{2} e^{2\vh^{2}}\) and the definition of
\(\Phi_{l,\kappa}\),
\begin{align*}
    \bigl\langle \Phi_{l,\kappa}, \Phi_{l,\kappa'} \bigr\rangle_{\HK}
    &= \int_{0}^{\infty}
       e^{-\vh^{2}} j_{l}(\kappa\vh)\,
       e^{-\vh^{2}} j_{l}(\kappa'\vh)\,
       \vh^{2} e^{2\vh^{2}} \dd\vh
    \\
    &= \int_{0}^{\infty}
       \vh^{2}\, j_{l}(\kappa\vh)\, j_{l}(\kappa'\vh) \dd\vh
    \\
     &= \mathcal{N}_{l}\delta(\kappa-\kappa'),
\end{align*}
where the last equality is the standard orthogonality of spherical
Bessel functions \cite{Watson1944, NIST2010}. The normalized factor is $\mathcal{N}_{l}(\kappa) = \pi / (2\kappa^{2})$.
\end{proof}

\begin{lemma}[Completeness and the King transform]
\label{lem:King-completeness}
For each $l \in  \Nzero$, the family
$\{\Phi_{l,\kappa} \mid \kappa \in \bbR_{\ge 0 }\}$
is complete in $\HK$.  More precisely, for every $f \in \HK$ there exists
a unique spectral coefficient
$\widetilde{f} \in L^{2}((0,\infty);
                     \mathcal{N}_{l}(\kappa)\,\dd\kappa)$
such that the \emph{King transform}
\begin{equation}
    f(\vh)
    = \int_{0}^{\infty}
      \widetilde{f}(\kappa)\,\Phi_{l,\kappa}(\vh)\,\dd\kappa
    \label{eq:KingTran}
\end{equation}
holds in $\HK$.  The spectral coefficient is recovered via the inverse
transform
\begin{equation}
    \widetilde{f}(\kappa)
    = \frac{1}{\mathcal{N}_{l}(\kappa)}
      \bigl\langle
          f, \Phi_{l,\kappa}
      \bigr\rangle_{\HK},
    \label{eq:KingTranInv}
\end{equation}
and the Parseval identity
\begin{equation}
    \|f\|_{\HK}^{2}
    = \int_{0}^{\infty}
      \bigl|\widetilde{f}(\kappa)\bigr|^{2}\,
      \mathcal{N}_{l}(\kappa)\,\dd\kappa
    \label{eq:parseval}
\end{equation}
holds.
\end{lemma}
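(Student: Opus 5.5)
The plan is to transfer the problem through the Liouville transformation of Lemma~\ref{lem:unitary} to the classical Hankel transform on \(L^{2}(0,\infty)\), and then pull the result back. For \(f\in\HK\) set \(\psi=\mathcal U f\in L^{2}(0,\infty)\). Under \(\mathcal U\) the generalized eigenfunctions become
\[
    (\mathcal U\Phi_{l,\kappa})(\vh)=\vh\,j_{l}(\kappa\vh),
\]
which are, up to the factor \(\kappa^{-1}\), the Riccati--Bessel functions \(\hat{\jmath}_{l}(\kappa\vh)=\kappa\vh\,j_{l}(\kappa\vh)\). These are the standard generalized eigenfunctions of \(h_{l}\) that satisfy the boundary condition at the origin. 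For \(l=0\) this is precisely the Dirichlet condition of Lemma~\ref{lem:selfadjoint-domain}, since \(\sin(\kappa\vh)\to0\).

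The key input is the Hankel inversion theorem in \(L^{2}\), which is Plancherel's theorem for the Hankel transform of order \(\nu=l+1/2\). Writing \(\sqrt{\kappa\vh}\,J_{l+1/2}(\kappa\vh)=\sqrt{2/\pi}\,\kappa\vh\, j_{l}(\kappa\vh)\), I take the form
\[
    \psi(\vh)=\frac{2}{\pi}\int_{0}^{\infty}\kappa\vh\,j_{l}(\kappa\vh)\,
    \Bigl(\int_{0}^{\infty}\kappa s\,j_{l}(\kappa s)\,\psi(s)\dd s\Bigr)\dd\kappa,
    \qquad
    \|\psi\|^{2}=\frac{2}{\pi}\int_{0}^{\infty}\Bigl|\int_{0}^{\infty}\kappa s\,j_{l}(\kappa s)\psi(s)\dd s\Bigr|^{2}\dd\kappa .
\]
Here both integrals are understood as \(L^{2}\)-limits of truncations to \([0,R]\). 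This is a unitary (indeed involutive) map of \(L^{2}(0,\infty)\), as in Watson and Titchmarsh. I would cite this result rather than reprove it. If a self-contained argument were wanted, the route would be the Weyl--Titchmarsh spectral theorem for \(h_{l}\): for the free radial operator the spectral measure is \(\frac{2}{\pi}\kappa^{2}\dd\kappa\) in the \(\kappa\)-variable relative to \(\vh j_{l}(\kappa\vh)\). This is consistent with Proposition~\ref{pro:spectrum}.

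Next I translate back. Since \(\mathcal U\) is unitary,
\[
    \int_{0}^{\infty}\kappa s\,j_{l}(\kappa s)\psi(s)\dd s
    =\kappa\,\langle f,\Phi_{l,\kappa}\rangle_{\HK}.
\]
The inversion formula then becomes
\[
    f=\mathcal U^{-1}\psi=\int_{0}^{\infty}\frac{2\kappa^{2}}{\pi}\langle f,\Phi_{l,\kappa}\rangle_{\HK}\,\Phi_{l,\kappa}\dd\kappa .
\]
This is \eqref{eq:KingTran}, with \(\widetilde f=\mathcal N_{l}^{-1}\langle f,\Phi_{l,\kappa}\rangle\) and \(\mathcal N_{l}=\pi/(2\kappa^{2})\), which gives \eqref{eq:KingTranInv}. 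For Parseval, the Hankel Plancherel identity gives
\[
    \|f\|^{2}=\frac{2}{\pi}\int_{0}^{\infty}\kappa^{2}|\langle f,\Phi_{l,\kappa}\rangle|^{2}\dd\kappa
    =\int_{0}^{\infty}|\widetilde f|^{2}\mathcal N_{l}\dd\kappa,
\]
which is \eqref{eq:parseval}.

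Uniqueness of \(\widetilde f\) in \(L^{2}(\mathcal N_{l}\dd\kappa)\) follows from injectivity, and in fact unitarity, of the Hankel transform. The map \(\widetilde f\mapsto\int\widetilde f\,\Phi_{l,\kappa}\dd\kappa\) is an isometry from \(L^{2}(\mathcal N_{l}\dd\kappa)\) onto \(\HK\), because it is the composition of \(\mathcal U^{-1}\), the Hankel transform, and the multiplication \(\widetilde f\mapsto \kappa\,\mathcal N_{l}\widetilde f\,\sqrt{2/\pi}\)-type rescalings, each of which is unitary. Completeness of \(\{\Phi_{l,\kappa}\}\) is then immediate: if \(\langle f,\Phi_{l,\kappa}\rangle=0\) for almost every \(\kappa\), then \(\|f\|=0\) by Parseval. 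This is also consistent with Lemma~\ref{lem:generalized-orthogonality}.

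The main obstacle is rigor about the sense of the integrals. \(\Phi_{l,\kappa}\) is not in \(\HK\), because \(\vh j_{l}(\kappa\vh)\) does not decay. So \(\langle f,\Phi_{l,\kappa}\rangle_{\HK}\) must first be defined for \(f\) in a dense class, such as \(\mathcal U^{-1}C_{c}^{\infty}\) or \(\mathcal U^{-1}(L^{1}\cap L^{2})\). It is then extended by continuity as an \(L^{2}\)-limit of truncated integrals over \(\vh\le R\), and \eqref{eq:KingTran} is interpreted likewise as a limit over \(\kappa\le K\) in \(\HK\). I would handle this by stating everything for the dense class, where the integrals converge absolutely, and invoking the \(L^{2}\) Hankel theorem for the extension.
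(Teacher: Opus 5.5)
Your proposal is correct and follows the paper's route: conjugate by \(\mathcal U\) and invoke the \(L^{2}\) theory of the Hankel transform of order \(l+1/2\), and your constants check out. You go beyond the paper in two ways: you derive \(\mathcal N_{l}=\pi/(2\kappa^{2})\) directly from the Hankel Plancherel identity rather than from the formal \(\delta\)-orthogonality of Lemma~\ref{lem:generalized-orthogonality}, and you state the \(L^{2}\)-limit sense in which the pairings and the transform must be read, which is needed because \(\Phi_{l,\kappa}\notin\HK\).
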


\begin{proof}
By Lemma~\ref{lem:unitary}, 
the problem reduces to the completeness of the spherical Hankel transform\cite{NIST2010} of order \(l+1/2\) on \(L^{2}(0,\infty)\).  The normalization factor follows from Lemma~\ref{lem:generalized-orthogonality}.
\end{proof}

\bigskip
\noindent Proposition~\ref{pro:spectrum},
Lemma~\ref{lem:generalized-orthogonality} and
Lemma~\ref{lem:King-completeness} together constitute the proof of
Theorem~\ref{thm:main-spectral}.

% ============================================================
\section{Density of the real-parameter King functions}
\label{sec:Density}
% ============================================================

The spectral theory of Section~\ref{sec:KingOperator} concerns the
imaginary-parameter branch
\(\Phi_{l,\kappa}(\vh) \) presented by Eq. \eqref{eq:KingN-kappa-recall} and is developed in the \(\HK\) space.  In the KMM, however,
one works with the \emph{real-parameter} King function
\[
    \Psi_{l,k}(\vh) = e^{-\vh^{2}} i_{l}(k\vh),\qquad
    k \in \bbR_{> 0}.
    \label{eq:KingN-k-recall}
\]
This function corresponds to the negative spectral parameter
\(\lambda = -k^{2} < 0\). Therefore, it lies in the resolvent set
\(\varrho (\mathcal{L}_{l}) = \mathbb{C} \setminus [0,\infty)\) of operator \(\mathcal{L}_{l}\), rather than in its spectral family.

We prove that the real-parameter family is nevertheless dense in the physically natural radial Hilbert space 
\begin{equation}
    H_{\rm rad}
    = L^{2}\big((0,\infty);\, \vh^{2} e^{-\vh^{2}}\,\dd\vh\big) ~.
    \label{eq:Hrad-density}
\end{equation}
The inner product (conjugate-linear in the second argument) is
denoted by \(\langle\cdot,\cdot\rangle_{H_{\rm rad}}\).
The weight \(\vh^{2}e^{-\vh^{2}}\) combines the spherical volume
element in 3V space with the Gaussian decay of the Maxwellian equilibrium.

%--------------------------------------------------------------------
\subsection{Two auxiliary lemmas}
\label{subsec:aux-lemmas}

The proof of density rests on the analyticity of the modified
spherical Bessel function \cite{Watson1944, NIST2010} and the uniqueness of the Fourier transform of finite complex measures.  We isolate the key steps below.

\begin{lemma}[Entire function generated by a test function]
\label{lem:F-entire}
Let \(h\in H_{\rm rad}\).  For \(z\in\mathbb C\), set
\[
    \Psi_{l,z}(\vh)=e^{-\vh^2}i_l(z\vh),
\]
and define
\[
    F(z)
    =
    \int_0^\infty
    h(\vh)\Psi_{l,z}(\vh)
    \vh^2e^{-\vh^2}\dd\vh.
\]
Then \(F\) is entire and admits the expansion
\begin{equation}
    F(z)
    =
    \sum_{n=0}^{\infty}
    a_{n,l}z^{2n+l}
    \int_0^\infty
    h(\vh)\vh^{2n+l+2}e^{-2\vh^2}\dd\vh,
    \label{eq:F-series}
\end{equation}
where
\begin{equation}
    a_{n,l}
    =
    \frac{\sqrt{\pi}}
    {2^{2n+l+1}n!\Gamma(n+l+3/2)}
    \ne0 .
    \label{eq:anl}
\end{equation}
\end{lemma}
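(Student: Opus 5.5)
The plan is to insert the power series of \(i_l\) into the integral and justify exchanging sum and integral by a dominated-convergence bound that is uniform on compact sets of \(z\). First recall the series
\[
    i_l(w)=\sqrt{\tfrac{\pi}{2w}}\,I_{l+1/2}(w)
    =\sum_{n=0}^{\infty}\frac{\sqrt{\pi}}{2^{2n+l+1}\,n!\,\Gamma(n+l+3/2)}\,w^{2n+l}
    =\sum_{n=0}^{\infty}a_{n,l}\,w^{2n+l}.
\]
It follows from \(I_\nu(w)=\sum_n (w/2)^{2n+\nu}/(n!\,\Gamma(n+\nu+1))\) with \(\nu=l+1/2\), since \(\sqrt{\pi/(2w)}\,(w/2)^{2n+l+1/2}=\sqrt{\pi}\,w^{2n+l}/2^{2n+l+1}\). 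This identifies \(a_{n,l}\) as in \eqref{eq:anl}, which is plainly nonzero because the Gamma factor is positive. Substituting \(w=z\vh\) and pulling out \(e^{-\vh^2}\) gives, pointwise in \(\vh\),
\[
    h(\vh)\Psi_{l,z}(\vh)\vh^2e^{-\vh^2}
    =\sum_{n}a_{n,l}z^{2n+l}\,h(\vh)\vh^{2n+l+2}e^{-2\vh^2}.
\]

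To interchange sum and integral, bound the sum of absolute values. Since \(a_{n,l}>0\), we have \(\sum_n a_{n,l}|z\vh|^{2n+l}=i_l(|z|\vh)\le C_l\,e^{|z|\vh}\); the last inequality holds for example because \(|i_l(x)|\le i_0(x)=\sinh x/x\le e^{x}\) for \(x\ge0\), by termwise comparison of the coefficients. The dominating function is therefore \(|h(\vh)|\,\vh^2e^{-2\vh^2+R\vh}\) for \(|z|\le R\). By Cauchy--Schwarz in \(H_{\rm rad}\) its integral is at most
\[
    \|h\|_{H_{\rm rad}}\Bigl(\int_0^\infty \vh^2e^{-3\vh^2+2R\vh}\dd\vh\Bigr)^{1/2}<\infty .
\]
Fubini--Tonelli (equivalently, dominated convergence for the partial sums) then justifies integrating termwise, which yields \eqref{eq:F-series}. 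The same estimate shows that each moment \(\int h\,\vh^{2n+l+2}e^{-2\vh^2}\dd\vh\) is finite.

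Entireness follows either from the convergent power series, whose coefficient sums are dominated by the finite integral above for every \(R\), so the radius of convergence is infinite, or directly by Morera plus dominated convergence: \(z\mapsto\Psi_{l,z}(\vh)\) is entire for each \(\vh\), and the local uniform bound \(e^{R\vh}\) holds. I expect the only real care point to be the growth bound \(|i_l(z\vh)|\le i_l(|z|\vh)\le e^{|z|\vh}\) for complex \(z\), which follows from the positivity of the coefficients. The Gaussian factor \(e^{-3\vh^2}\) then absorbs any exponential growth, so the step is unproblematic.
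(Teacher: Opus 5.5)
Your proof is correct and follows essentially the paper's argument. You expand $i_l$ in its Taylor series, dominate the absolute series by $e^{R\vh}$ for $|z|\le R$, absorb that growth by Cauchy--Schwarz against $\vh^{2}e^{-3\vh^{2}+2R\vh}$ in $H_{\rm rad}$, and integrate termwise to get a power series that converges for every $z$. Your use of $a_{n,l}>0$ to identify the majorant as $i_l(|z|\vh)$ is a slightly tidier version of the paper's coefficient estimate.

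One small repair is needed. For odd $l$ the series of $i_l$ and $i_0$ have opposite parity, so $i_l\le i_0$ does not follow from termwise comparison of coefficients. The bound you actually need, $i_l(x)\le e^{x}$ for $x\ge0$, does hold. It follows from $a_{n,l}=1/\bigl(2^{n}n!\,(2n+2l+1)!!\bigr)\le 1/(2n+l)!$, or from $i_l(x)=\tfrac12\int_{-1}^{1}e^{xt}P_l(t)\,\dd t$ with $|P_l|\le 1$.
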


\begin{proof}
The proof follows by inserting the absolutely convergent Taylor expansion
\[
    i_l(w)=w^l\sum_{n=0}^\infty a_{n,l}w^{2n} ,
\]
where $a_{n,l}$ given by Eq. \eqref{eq:anl}. 
Fix \(R> 0 \) and let \(|z| \le R\).
From \eqref{eq:anl} we have
\(|a_{n,l}| \le C_{l}/(\Gamma(n+l+3/2)\,n!)\)
for some \(C_{l}> 0 \) independent of \(n\).
Consequently, for every \(\vh> 0 \),
\[
    \sum_{n=0}^{\infty}|a_{n,l}|R^{2n+l}\vh^{2n+l}
    \le C_{l}R^{l}\vh^{l}
       \sum_{n=0}^{\infty}
       \frac{(R^{2}\vh^{2})^{n}}{\Gamma(n+l+3/2)\,n!}
    \le C_{l,R}(1+\vh)^{l}e^{R\vh},
\]
where \(C_{l,R}> 0 \) depends only on \(l\) and \(R\).
Hence
\[
    |h(\vh)\,i_{l}(z\vh)|\,\vh^{2}e^{-2\vh^{2}}
    \le C_{l,R}\,|h(\vh)|(1+\vh)^{l}e^{R\vh}\vh^{2}e^{-2\vh^{2}}.
\]
By the Cauchy--Schwarz inequality,
\begin{align*}
    \int_{0}^{\infty}
    |h(\vh)|(1+\vh)^{l}e^{R\vh}\vh^{2}e^{-2\vh^{2}}\,\dd\vh
    \le \|h\|_{H_{\rm rad}}
       \Bigl(
           \int_{0}^{\infty}
           (1+\vh)^{2l}\vh^{2}
           e^{2R\vh-3\vh^{2}}\,\dd\vh
       \Bigr)^{\!1/2} 
    < \infty,
\end{align*}
since \(e^{-3\vh^{2}}\) dominates \(e^{2R\vh}\) as \(\vh\to\infty\).
By dominated convergence, summation and integration may be
interchanged, yielding the absolutely and uniformly convergent power series \eqref{eq:F-series} for \(|z|\le R\).  Since \(R\) is
arbitrary, \(F\) is entire.
\end{proof}

\begin{lemma}[Finite complex measure with vanishing moments]
\label{lem:moment-measure-zero}
Let \(\varepsilon\) be a finite complex Borel measure on \((0,\infty)\)
and let \(|\varepsilon|\) denote its total variation measure.  Assume
that there exists \(\eta> 0 \) such that
\begin{equation}
    \int_{0}^{\infty} e^{\eta x}\,\dd|\varepsilon|(x) < \infty.
    \label{eq:exp-moment-finite}
\end{equation}
If
\begin{equation}
    \int_{0}^{\infty} x^{n}\,\dd\varepsilon(x)=0
    \qquad\text{for all } n\in\Nzero ~.
    \label{eq:moments-vanish}
\end{equation}
The Fourier transform is injective on the space of finite
complex Borel measures on \(\bbR\) \cite{Folland2016}; therefore \(\varepsilon \equiv 0\).
\end{lemma}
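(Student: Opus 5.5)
The plan is to show that the Fourier--Laplace transform of \(\varepsilon\) vanishes on a neighbourhood of the imaginary axis, and then invoke injectivity of the Fourier transform on finite complex measures. Define
\[
    G(s) := \int_{0}^{\infty} e^{s x}\,\dd\varepsilon(x),
    \qquad s\in\bbC,\ \operatorname{Re}s<\eta .
\]
By \eqref{eq:exp-moment-finite}, \(|e^{sx}|\le e^{\eta x}\) on this half-plane, so the integral converges absolutely. Dominated convergence applied to difference quotients, using \(|x e^{sx}|\le C_\delta e^{\eta x}\) for \(\operatorname{Re}s\le\eta-\delta\), shows that \(G\) is holomorphic on \(\{\operatorname{Re}s<\eta\}\). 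In particular \(G\) is holomorphic on the disc \(|s|<\eta\).

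The main step is to compute the Taylor coefficients of \(G\) at \(s=0\) and show they equal the moments. For \(|s|<\eta\) we expand \(e^{sx}=\sum_n s^n x^n/n!\). The termwise bound
\[
    \sum_n |s|^n x^n/n! = e^{|s|x}\le e^{\eta x}
\]
is \(|\varepsilon|\)-integrable, so by Fubini--Tonelli (dominated convergence) we may interchange sum and integral. This gives
\[
    G(s)=\sum_{n=0}^{\infty}\frac{s^n}{n!}\int_0^\infty x^n\,\dd\varepsilon(x),
\]
and by \eqref{eq:moments-vanish} every coefficient vanishes. Hence \(G\equiv0\) on \(|s|<\eta\). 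Since the half-plane \(\{\operatorname{Re}s<\eta\}\) is connected, the identity theorem gives \(G\equiv 0\) on all of it.

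Restricting to the imaginary axis \(s=-\ii\xi\), \(\xi\in\bbR\), we get
\[
    \hat\varepsilon(\xi)=\int e^{-\ii\xi x}\,\dd\varepsilon(x)=0 \quad\text{for all } \xi\in\bbR ,
\]
where \(\varepsilon\) is regarded as a finite measure on \(\bbR\) supported in \((0,\infty)\). Injectivity of the Fourier transform on finite complex Borel measures \cite{Folland2016} then yields \(\varepsilon\equiv0\).

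The only delicate point is the domination in the interchange. It is exactly here that the exponential-moment hypothesis is essential: polynomial moments alone do not determine a measure (Stieltjes-type counterexamples). Because \(e^{|s|x}\le e^{\eta x}\) requires \(|s|<\eta\), the disc of analyticity at the origin has radius \(\eta\). This radius is positive, which is all the identity theorem needs.
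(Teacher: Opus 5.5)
Your proof is correct and follows the paper's argument: both show that the Laplace transform \(\int_0^\infty e^{sx}\dd\varepsilon(x)\) is holomorphic on \(\{\Re s<\eta\}\), has vanishing Taylor coefficients at the origin because all moments vanish, hence vanishes on the whole half-plane and in particular on the imaginary axis, so that injectivity of the Fourier transform on finite complex measures gives \(\varepsilon\equiv0\). The only cosmetic difference is that you identify the Taylor coefficients by expanding \(e^{sx}\) and interchanging sum and integral (dominated by \(e^{\eta x}\)), whereas the paper differentiates under the integral sign at the origin.
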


\begin{proof}
Condition \eqref{eq:exp-moment-finite} guarantees that the Laplace
transform of \(\varepsilon\),
\[
    \mathcal{L}_{\varepsilon}(t)
    := \int_{0}^{\infty} e^{tx}\,\dd\varepsilon(x),
\]
is well defined and holomorphic in the open half-plane
\(\{t \in \mathbb{C} : \Re(t) < \eta\}\).
Differentiating under the integral sign gives
\[
    \mathcal{L}_{\varepsilon}^{(n)}(t)
    = \int_{0}^{\infty} x^{n} e^{tx}\,\dd\varepsilon(x),
    \qquad \Re(t) < \eta .
\]
Evaluating at \(t=0\) and using \eqref{eq:moments-vanish} yields
\(\mathcal{L}_{\varepsilon}^{(n)}(0)=0\) for all \(n\ge 0\).
Hence \(\mathcal{L}_{\varepsilon}\) vanishes identically in a
neighbourhood of the origin; by analytic continuation it vanishes on
the whole half-plane \(\Re(t) < \eta\).

In particular, setting \(t = \ii\chi\) with \(\chi \in \bbR\) gives
\[
    0 = \mathcal{L}_{\varepsilon}(\ii\chi)
      = \int_{0}^{\infty} e^{\ii\chi x}\,\dd\varepsilon(x)
      = \widehat{\varepsilon}(\chi),
\]
where \(\widehat{\varepsilon}\) is the Fourier transform of the finite
complex measure \(\varepsilon\) (extended by zero to the negative
half-line).  The Fourier transform is injective on the space of finite
complex Borel measures on \(\bbR\) \cite{Folland2016}; therefore \(\varepsilon \equiv 0\).
\end{proof}

%--------------------------------------------------------------------
\subsection{Proof of Theorem~\ref{thm:King-density}}
\label{subsec:proof-density}

\begin{proof}
Let \(g\in H_{\rm rad}\) be orthogonal to every \(\Psi_{l,k}\) with
\(k> 0 \).  Since the inner product is conjugate-linear in the second
argument,
\[
    0
    =
    \langle \Psi_{l,k},g\rangle_{H_{\rm rad}}
    =
    \int_0^\infty
    e^{-\vh^2}i_l(k\vh)
    \overline{g(\vh)}
    \vh^2e^{-\vh^2}\dd\vh.
\]
Set \(h=\overline g\).  Then the function \(F\) in
Lemma~\ref{lem:F-entire} satisfies \(F(k)=0\) for every \(k> 0 \). Since \(F\) is entire and vanishes on a
set with an accumulation point, the identity theorem forces
\(F \equiv 0\) on \(\mathbb{C}\).
All coefficients \(a_{n,l}\) in \eqref{eq:anl} are non-zero, so the
uniqueness of the power-series expansion \eqref{eq:F-series} yields
\begin{equation}
    M_{n}
    := \int_{0}^{\infty}
       h(\vh)\,\vh^{2n+l+2} e^{-2\vh^{2}}\,\dd\vh
    = 0,
    \qquad \forall\,n \in \Nzero.
    \label{eq:moments-zero}
\end{equation}

Now set \(x = \vh^{2}\) and define a finite complex Borel measure
\(\varepsilon\) on \((0,\infty)\) by
\[
    \dd\varepsilon(x)
    = \frac12\,h(\sqrt{x})\,x^{(l+1)/2} e^{-2x}\,\dd x .
\]
Finiteness follows from the Cauchy--Schwarz estimate\cite{}
\begin{align*}
    |\varepsilon|((0,\infty))
    = \int_{0}^{\infty} |h(\vh)|\,\vh^{\,l+2} e^{-2\vh^{2}}\,\dd\vh
    \le \|h\|_{H_{\rm rad}}
       \Bigl(
           \int_{0}^{\infty} \vh^{2l+2} e^{-3\vh^{2}}\,\dd\vh
       \Bigr)^{\!1/2}
     < \infty .
\end{align*}
The moment condition \eqref{eq:moments-zero} is exactly
\(\int_{0}^{\infty} x^{n}\,\dd\varepsilon(x) = 0\) for all
\(n \in \Nzero\).

It remains to verify the exponential moment condition
\eqref{eq:exp-moment-finite} for \(\varepsilon\).
For any \(\eta < 2\),
\[
    \int_{0}^{\infty} e^{\eta x}\,\dd|\varepsilon|(x)
    = \int_{0}^{\infty}
      |h(\vh)|\,\vh^{\,l+2} e^{-(2-\eta)\vh^{2}}\,\dd\vh .
\]
Choosing \(\eta = 1\) and applying Cauchy--Schwarz as above,
\begin{align*}
    \int_{0}^{\infty} |h(\vh)|\,\vh^{\,l+2} e^{-\vh^{2}}\,\dd\vh
    &\le \|h\|_{H_{\rm rad}}
       \Bigl(
           \int_{0}^{\infty} \vh^{2l+2} e^{-\vh^{2}}\,\dd\vh
       \Bigr)^{\!1/2}
     < \infty .
\end{align*}
Thus Lemma~\ref{lem:moment-measure-zero} applies (with \(\eta = 1\))
and yields \(\varepsilon \equiv 0\).

Because the density \(x^{(l+1)/2}e^{-2x}\) is strictly positive on
\((0,\infty)\), we obtain \(h(\sqrt{x}) = 0\) for almost every
\(x> 0 \), hence \(h = 0\) and consequently \(g = 0\) in \(H_{\rm rad}\).
Therefore the orthogonal complement of
\(\operatorname{span}\mathscr{K}_{l}^{\bbR}\) is trivial, and the
span is dense in \(H_{\rm rad}\).
\end{proof}

%--------------------------------------------------------------------
\subsection{Consequences and the Laguerre connection}
\label{subsec:consequences-density}

The restriction \(k \in \bbR_{> 0}\) covers all physical shift
amplitudes (\(u = \varsigma k/2 \ge 0\)).  By analyticity in \(k\),
the density conclusion extends immediately to any larger parameter
set containing \([0,\infty)\), for instance \(\Re(k) \ge 0\) or all
\(k \in \mathbb{C} \).
A multiplicative renormalization reveals the direct link to the classical Laguerre basis.

\begin{corollary}[Density of the renormalized King function]
\label{cor:King-density-renormalization}
For every \(l \in \Nzero\), the renormalized family
\[
    \widetilde{\mathscr{K}}_{l}
    = \bigl\{
        \widetilde{\Psi}_{l,k}(\vh)
        = \vh^{-l} e^{-\vh^{2}} i_{l}(k\vh)
        \mid k \in \bbR_{> 0}
      \bigr\}
\]
is dense in the polynomial-weighted space
\begin{equation}
    \widetilde{H}_{\rm rad}^{\,l}
    = L^{2}\big((0,\infty);\, \vh^{2l+2} e^{-\vh^{2}}\,\dd\vh\big).
    \label{eq:Hrad-renormalization}
\end{equation}
\end{corollary}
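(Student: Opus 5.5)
The plan is to deduce the corollary from Theorem~\ref{thm:King-density} through a unitary multiplication map, and to fall back on repeating the orthogonal-complement argument directly if that reduction breaks down.

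The unitary is \((Tg)(\vh) = \vh^{l} g(\vh)\). It maps \(\widetilde{H}_{\rm rad}^{\,l}\) isometrically onto \(H_{\rm rad}\), because
\[
\int_0^\infty |\vh^{l} g|^{2}\,\vh^{2}e^{-\vh^{2}}\dd\vh=\int_0^\infty |g|^{2}\,\vh^{2l+2}e^{-\vh^{2}}\dd\vh .
\]
It is surjective, since \(T^{-1}f=\vh^{-l}f\) satisfies the same identity. Under this map \(T\widetilde{\Psi}_{l,k}=\Psi_{l,k}\). A unitary preserves closures of linear spans, so density of \(\operatorname{span}\widetilde{\mathscr K}_l\) in \(\widetilde H^{\,l}_{\rm rad}\) is equivalent to density of \(\operatorname{span}\mathscr K_l^{\bbR}\) in \(H_{\rm rad}\), and the latter is Theorem~\ref{thm:King-density}.

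The step to check is that \(\widetilde\Psi_{l,k}\) actually belongs to \(\widetilde H^{\,l}_{\rm rad}\). This holds because \(\widetilde\Psi_{l,k}\) is the preimage under \(T\) of \(\Psi_{l,k}\in H_{\rm rad}\). Membership of \(\Psi_{l,k}\) in \(H_{\rm rad}\) follows from \(|i_l(k\vh)|\le C(1+\vh)^{l}e^{k\vh}\), which is the bound used in Lemma~\ref{lem:F-entire}, together with Gaussian decay. Near the origin, \(\vh^{-l}i_l(k\vh)\to k^{l}/(2l+1)!!\), so there is no singularity.

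As a cross-check, and in case the reduction needs independent justification, the direct route is as follows. Take \(g\perp\widetilde\Psi_{l,k}\) for all \(k>0\) and set \(h=\vh^{l}\overline g\in H_{\rm rad}\). Then \(\langle \widetilde\Psi_{l,k},g\rangle_{\widetilde H^{\,l}_{\rm rad}}=F(k)\) with \(F\) as in Lemma~\ref{lem:F-entire}. From here the argument of Theorem~\ref{thm:King-density} runs verbatim: identity theorem, vanishing moments, then Lemma~\ref{lem:moment-measure-zero}. The expansion \(\vh^{-l}i_l(k\vh)=k^{l}\sum_n a_{n,l}k^{2n}\vh^{2n}\) also links the corollary to the Laguerre picture of Theorem~\ref{thm:main-King-Laguerre}, since \(\{L_p^{(l+1/2)}(\vh^{2})\}\) is an orthogonal basis of \(\widetilde H^{\,l}_{\rm rad}\). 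I do not need that link for the proof.

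No step is a real obstacle. The only care needed is the bookkeeping of weights, namely \(\vh^{2l+2}=\vh^{2l}\cdot\vh^{2}\).
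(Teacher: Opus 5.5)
Your proposal is correct and follows essentially the same route as the paper. The paper uses the inverse of your map, the unitary \(V=T^{-1}\) with \((Vf)(\vh)=\vh^{-l}f(\vh)\) from \(H_{\rm rad}\) onto \(\widetilde H^{\,l}_{\rm rad}\), and transports the density from Theorem~\ref{thm:King-density} through it; your membership check and the direct orthogonal-complement cross-check are valid extras that the paper does not spell out.
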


\begin{proof}
The multiplication operator
\(V : H_{\rm rad} \to \widetilde{H}_{\rm rad}^{\,l}\),
\((Vf)(\vh) = \vh^{-l} f(\vh)\),
is unitary because for all \(f,g \in H_{\rm rad}\),
\[
    \langle Vf, Vg \rangle_{\widetilde{H}_{\rm rad}^{\,l}}
    = \int_{0}^{\infty}
      \vh^{-l} f(\vh)\,
      \overline{\vh^{-l} g(\vh)}\,
      \vh^{2l+2} e^{-\vh^{2}}\,\dd\vh
    = \langle f, g \rangle_{H_{\rm rad}} .
\]
By Theorem~\ref{thm:King-density}, \(\mathscr{K}_{l}^{\bbR}\) is
dense in \(H_{\rm rad}\); hence its image under \(V\) is dense in
\(\widetilde{H}_{\rm rad}^{\,l}\).
\end{proof}

\begin{remark}[Laguerre-space correspondence]
\label{rem:Laguerre-density}
Under the change of variable \(x = \vh^{2}\), the renormalized family
maps to \(x^{-l/2} e^{-x} i_{l}(k\sqrt{x})\), and
\(\widetilde{H}_{\rm rad}^{\,l}\) is isometrically isomorphic to the Laguerre space
\[
    \mathscr{L}_{l+1/2}^{2}
    = L^{2}\big((0,\infty);\, x^{l+1/2} e^{-x}\,\dd x\big).
\]
Corollary~\ref{cor:King-density-renormalization} is therefore
equivalent to the density of
\(\{x^{-l/2} e^{-x} i_{l}(k\sqrt{x}) \mid k> 0 \}\)
in the Hilbert space naturally associated with the Laguerre basis \(L_{p}^{(l+1/2)}(x)\). 
\end{remark}

% ============================================================
\section{Applications to King mixture representations}
\label{sec:Applications}
% ============================================================

The main results of Section~\ref{subsec:main-results} establish two
complementary perspectives on the complex-parameter King function
\(\mathcal{K}_{l}(\vh;k,\varsigma)\) introduced in
Definition~\ref{def:King-function-Complex}:
\begin{itemize}
    \item For \(k \in \ii\bbR_{> 0 }\) (imaginary branch), the
          functions \(\mathcal{K}_{l}\) is the generalized
          eigenfunction of the self-adjoint King operator
          \(\mathcal{L}_{l}\) and are complete in \(\HK\)
          (Theorems~\ref{thm:main-unitary}--\ref{thm:main-spectral}).
    \item For \(k \in \bbR_{> 0}\) (real branch), they are the
          radial profiles of Shifted Gaussian and are dense in
          \(H_{\rm rad}\) (Theorem~\ref{thm:King-density}).
\end{itemize}
We briefly discuss the modeling implications of the preceding analysis for
King mixture representations. 

\begin{remark}[Role of general complex parameters]
For a general complex parameter
\(k \in \mathbb{C} \setminus (\bbR_{> 0} \cup \ii \bbR_{> 0 }) \),
the family remains mathematically well-defined and inherits the
density of the real branch by analyticity, but adds no independent
modeling power.  We therefore focus on the real and imaginary
branches, which provide, respectively, the approximation basis and
the spectral basis for the theory.
\end{remark}

%--------------------------------------------------------------------
\subsection{Laplace-King representation}
\label{subsec:Laplace-King}

% --- Imaginary branch: continuous KMM ---
For \(k = \ii\kappa\) with \(\kappa \in \bbR_{\ge 0 }\), the King
function reduces to the generalized eigenfunction
\(\Phi_{l,\kappa}(\vh) = e^{-\vh^{2}}j_{l}(\kappa\vh)\) of
\(\mathcal{L}_{l}\) (up to a constant prefactor).  If \(f_{l}^{m} (\vh) \in \HK\) (decaying
faster than \(e^{-\vh^{2}}\) as \(\vh\to\infty\)),
Theorem~\ref{thm:main-spectral} provides the orthogonal King
transform pair 
\begin{align}
    f_{l}^{m}(\vh)
    &= \int_{0}^{\infty}
       \widetilde{f}_{l}^{\,m}(\kappa)\,
       \mathcal{K}_{l}(\vh;\ii\kappa,\varsigma)\,\dd\kappa,
    \label{eq:app-forward} \\[4pt]
    \widetilde{f}_{l}^{\,m}(\kappa)
    &= 
    \frac{1}{|a_l (\kappa) |^2}
    \frac{2\kappa^{2}}{\pi}
       \bigl\langle
           f_{l}^{m} (\cdot),\,
           \mathcal{K}_{l}(\cdot;\ii\kappa,\varsigma)
       \bigr\rangle_{\HK},
    \label{eq:app-inverse}
\end{align}
together with the Parseval identity. 
Coefficient \(a_l (\kappa)\) satisfies \( \mathcal{K}_{l}(\vh;\ii\kappa,\varsigma) = a_l (\kappa) \Phi_{l,\kappa}(\vh) \).
This can be viewed as a
\emph{continuous KMM}: a spectral integral over the imaginary branch
that diagonalizes the King operator \(\mathcal{L}_{l}\).

% --- Real branch: discrete KMM ---
For the real branch (\(k \in \bbR_{> 0} \)), let \(f_{l}^{m}(\vh)\) be a radial amplitude in
the laboratory-frame spherical-harmonic expansion
\eqref{eq:f-spherical-expansion} of a distribution \(f(\bm \vh)\).
If \(f_{l}^{m} (\vh) \in H_{\rm rad}\),
Theorem~\ref{thm:King-density} immediately yields the following.

\begin{theorem}[Universal approximation of radial mode]
\label{thm:KMM-universal}
Let \(f_{l}^{m} (\vh) \in H_{\rm rad}\).  For any \(\varepsilon> 0 \) there
exist an integer \(S\), positive parameters \(k_{1},\dots,k_{S}\) and weights \(\frak{n}_{1}, \dots, \frak{n}_{S} \) such that
\begin{equation}
    \Bigl\|
        f_{l}^{m}(\vh)
        - \sum_{s=1}^{S}
          \frak{n}_{s}\,
          \mathcal{K}_{l}(\vh;k_{s},\varsigma)\,
          \frac{(l-m)!}{(l+m)!}\,
          \overline{\widetilde Y_l^m(\theta_{k_{s}},\phi_{k_{s}})}
    \Bigr\|_{H_{\rm rad}} < \varepsilon .
    \label{eq:KMM-approx}
\end{equation}
Choose directions \((\theta_{k_s},\phi_{k_s})\) such that
\(\widetilde Y_l^m(\theta_{k_s},\phi_{k_s})\neq0\).
Then the angular factors can be absorbed into the weights \(\frak{n}_{s} \).
\end{theorem}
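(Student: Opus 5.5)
The plan is to reduce Theorem~\ref{thm:KMM-universal} directly to Theorem~\ref{thm:King-density} by observing that every summand in \eqref{eq:KMM-approx} is a nonzero scalar multiple of some $\Psi_{l,k_s}$. By Remark~\ref{rem:KlN-u}, for real $k>0$ we have $\mathcal{K}_{l}(\vh;k,\varsigma)=A_l(k,\varsigma)\,\Psi_{l,k}(\vh)$, where the normalization $A_l(k,\varsigma)=\pi^{-3/2}\varsigma^{-3}(2l+1)e^{-(k/2)^2}$ is strictly positive. Hence, for any choice of directions, the summand $\frak{n}_s\,\mathcal{K}_{l}(\cdot;k_s,\varsigma)\,\frac{(l-m)!}{(l+m)!}\,\overline{\widetilde Y_l^m(\theta_{k_s},\phi_{k_s})}$ equals $c_s\Psi_{l,k_s}$ with
\[
c_s=\frak{n}_s\,A_l(k_s,\varsigma)\,\frac{(l-m)!}{(l+m)!}\,\overline{\widetilde Y_l^m(\theta_{k_s},\phi_{k_s})}.
\]

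The first step is to apply Theorem~\ref{thm:King-density}. Given $f_l^m\in H_{\rm rad}$ and $\varepsilon>0$, density of $\operatorname{span}\mathscr{K}_l^{\bbR}$ yields an integer $S$, parameters $k_1,\dots,k_S>0$ and complex coefficients $c_1,\dots,c_S$ with $\|f_l^m-\sum_s c_s\Psi_{l,k_s}\|_{H_{\rm rad}}<\varepsilon$. Coefficients are complex because $f_l^m$ may be complex-valued, and $H_{\rm rad}$ is a complex Hilbert space.

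The second step is to choose the directions and solve for the weights. Since $P_l^m$ is a nonzero polynomial times $(1-x^2)^{|m|/2}$, it is not identically zero on $(-1,1)$, so there exists $\theta_*\in(0,\pi)$ with $P_l^m(\cos\theta_*)\neq0$. Then $\widetilde Y_l^m(\theta_*,\phi)=P_l^m(\cos\theta_*)e^{\ii m\phi}$ is nonzero for every $\phi$. Taking $(\theta_{k_s},\phi_{k_s})=(\theta_*,0)$ for all $s$ makes the angular factor nonzero. One then sets
\[
\frak{n}_s=c_s\Big/\Bigl[A_l(k_s,\varsigma)\,\tfrac{(l-m)!}{(l+m)!}\,\overline{\widetilde Y_l^m(\theta_*,0)}\Bigr],
\]
which is well defined because every factor in the denominator is nonzero. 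With these weights the sum in \eqref{eq:KMM-approx} coincides exactly with $\sum_s c_s\Psi_{l,k_s}$, and the estimate follows.

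There is no substantial obstacle, since the analytic content is entirely in Theorem~\ref{thm:King-density}. The only points needing care are the two non-vanishing facts: positivity of $A_l$ for real $k$, and existence of a direction where $\widetilde Y_l^m\neq0$. I would also remark that the weights must in general be complex, or at least real of either sign. Restricting to nonnegative weights, as a probabilistic mixture would, does not follow from linear density and would require a separate cone-density argument.
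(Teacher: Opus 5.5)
Your proposal is correct and is essentially the paper's argument. The paper also treats Theorem~\ref{thm:KMM-universal} as an immediate consequence of Theorem~\ref{thm:King-density}, absorbing the strictly positive factor $A_l(k_s,\varsigma)$ and a nonvanishing angular factor into the weights $\mathfrak{n}_s$. You add two non-vanishing checks the paper leaves implicit: positivity of $A_l$ for real $k$, and the existence of $\theta_*$ with $P_l^m(\cos\theta_*)\neq0$. Your caveat that the weights are in general complex or of mixed sign is accurate.
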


The finite sum in \eqref{eq:KMM-approx} defines the \emph{King mixture
model} (KMM) for the radial amplitude \(f_{l}^{m}(\vh)\) --- a
discrete, non-orthogonal superposition of Shifted Gaussian radial
profiles.  For fixed angular momentum \(l\), this is the exact radial
analog of the universal approximation theorem for GMM \cite{McLachlan2000, Bishop2006} in \(\bbR^{3}\) with fixed covariance.  The centrifugal factor \(i_{l}(k\vh) \sim (k\vh)^{l}\) automatically adapts the mixture to the angular-momentum barrier, a feature absent in standard GMM.

\begin{remark}
    The condition \(f_{l}^{m} (\vh) \in H_{\rm rad}\) permits function growing slightly slower than \(e^{\vh^{2}/2}\) as \(\vh \to \infty\).
    The Gaussian weight \(\vh^{2}e^{-\vh^{2}}\) in the measure precisely compensates such growth.   From Theorem \ref{thm:KMM-universal}, 
\begin{equation}
        f_{l}^{m}(\vh)
        \approx 
        \sum_{s=1}^{S}
          \frak{n}_{s}\,
          \mathcal{K}_{l}(\vh;k_{s},\varsigma)\,
          \frac{(l-m)!}{(l+m)!}\,
          \overline{\widetilde Y_l^m(\theta_{k_{s}},\phi_{k_{s}})} ~.
    \label{eq:KMM-flm}
\end{equation}
\end{remark}

% --- Unified Laplace-King representation ---
Inserting either the discrete or the continuous KMM back into the
spherical-harmonic expansion \eqref{eq:f-spherical-expansion} yields a
finite- or infinite-dimensional representation of the full 3V
velocity distribution.  In the discrete case one obtains a
superposition of Shifted Gaussian components sharing the same
thermal speed \(\varsigma\) but with distinct shift amplitudes:
\begin{equation}
    f(\vh,\theta,\phi)
    \;\approx\;
    \sum_{l=0}^{L}
    \sum_{m=-l}^{l}
    \sum_{s=1}^{S_{l,m}}
        \frak{n}_{s}\,
        \mathcal{K}_{l}(\vh;k_{s},\varsigma)\,
        \frac{(l-m)!}{(l+m)!}\,
        \overline{\widetilde Y_l^m(\theta_{k_{s}},\phi_{k_{s}})}\,
        \widetilde Y_l^m(\theta,\phi).
    \label{eq:f-Laplace-King}
\end{equation}
Here \((\theta_{k},\phi_{k})\) are the spherical angles of the shift vector associated with the parameter \(k\); when the polar axis is aligned with the shift direction, only the \(m=0\) term survives. We refer to \eqref{eq:f-Laplace-King} as the \emph{Laplace-King} representation.  Its continuous analog is obtained by replacing \(\sum_{s} \frak{n}_{s}\) with
\(\int_{0}^{\infty} \widetilde{f}_{l}^{\,m}(\kappa)\,\dd\kappa\)
and setting \(k_s = \ii\kappa_s\).

\begin{remark}
    The representation \eqref{eq:f-Laplace-King} should be understood as a modal Laplace--King expansion. If the same shift vectors and weights are imposed consistently across all angular modes, it reduces to a genuine Shifted-Gaussian mixture. Otherwise, it is a more general modal King representation.
\end{remark}

%--------------------------------------------------------------------
\subsection{Comparison with the Laguerre expansion}
\label{subsec:comparison-Laguerre}

The classical Laguerre spectral method expands the \emph{relative}
distribution \(\mathfrak{f} = f/M_{\bm u,\varsigma}\) in the co-moving
frame:
\[
    \mathfrak{f}(\widehat{\bm w})
    = \sum_{l=0}^{L}
      \sum_{m=-l}^{l}
      \sum_{p=0}^{P_{l,m}}
      c_{p,l}^{\,m}\,
      R_{p,l}^{\mathrm{Lag}}(\widehat{w})\,
      \widetilde Y_l^m(\theta_{w},\phi_{w}),
\]
where \(R_{p,l}^{\mathrm{Lag}}(\widehat{w}) = \widehat{w}^{\,l} L_{p}^{(l+1/2)}(\widehat{w}^{2})\)
as in \eqref{eq:Laguerre}.  The expansion coefficients are
\begin{equation}
    c_{p,l}^{\,m}
    =
    \frac{2 (p!)}{\Gamma(p+l+3/2)}
    \int_{0}^{\infty}
    \mathfrak{f}_{l}^{\,m}(\widehat{w})\,
    R_{p,l}^{\mathrm{Lag}}(\widehat{w})\,
    \widehat{w}^{2} e^{-\widehat{w}^{2}}\,\dd\widehat{w},
    \label{eq:Laguerre-coeff}
\end{equation}
with angular projections
\[
    \mathfrak{f}_{l}^{\,m}(\widehat{w})
    =
    \Bigl(N_l^m \Bigr)^2
    \int_{\mathbb{S}^{2}}
      \mathfrak{f}(\widehat{w},\theta_{w},\phi_{w})\,
      \overline{\widetilde Y_l^m(\theta_{w},\phi_{w})}
      \,\dd\Omega .
\]

The King--Laguerre identity (Theorem~\ref{thm:main-King-Laguerre}) shows that each King profile is an infinite Laguerre superposition.  Thus a King mixture may be viewed as a nonlinear reorganization of the fixed-center Laguerre hierarchy, with the shift parameter partially resumming infinitely many Laguerre modes.

The two representations are therefore complementary.  The Laguerre basis is adapted to the co-moving spectral structure of the Ornstein--Uhlenbeck operator, while the King basis is adapted to shifted Gaussian components in the laboratory frame.  The former is particularly natural near a single equilibrium Maxwellian; the latter is useful for representing finite-shift or moderately non-equilibrium radial structures. Together with the continuous King transform, they form a complete toolkit for non-equilibrium velocity-space representations.

% ============================================================
\section{Discussion and Conclusion}
\label{sec:Conclusion}
% ============================================================

We have developed a spectral and approximation framework for King
function associated with shifted Gaussian distribution.  The central
observation is that King function is not another co-moving Hermite or Laguerre
spectral mode, but laboratory-frame radial kernels obtained from the
spherical-harmonic decomposition of shifted Gaussian profiles. In particular, the King--Laguerre expansion shows that
each King function is an infinite superposition of Laguerre radial mode,
which can be interpreted as a shift-adapted resummation of the fixed-center Laguerre hierarchy.

The operator-theoretic analysis identifies the natural
self-adjoint King operator.  After a Gaussian-weighted Liouville
transform, the King differential expression is unitarily equivalent to
the free radial Schrödinger operator on the half-line.  Hence its
self-adjoint realization has purely absolutely continuous spectrum
\([0,\infty)\).  The corresponding generalized eigenfunction is the
imaginary-parameter King function, and the associated spectral
representation is King transform, a Gaussian-conjugated spherical Hankel transform.
In contrast, the real-parameter King function correspond to negative spectral parameter, which lie in the resolvent set.

This distinction yields a useful dual interpretation of the complex-parameter King function.  The imaginary branch provides an orthogonal continuous transform, whereas the real branch provides a non-orthogonal
approximation dictionary.  Although the real-parameter King function is
not spectral eigenfunction, their span is dense in the natural radial
space \(H_{\rm rad}\).  This density property gives a rigorous approximation-theoretic
basis for real-parameter King mixture model (KMM).  In addition, the weighted
\(L^1\)-integrability criteria and closed-form moment formulas is derived, justifying the normalization of the King
kernel and provide explicit moment constraints for finite King mixture.

Several questions remain open.  The density theorem is qualitative and
does not provide convergence rates for finite mixture.  Conditioning of
the non-orthogonal King dictionary, stability of parameter fitting,
positivity preservation, optimal selection of shift parameters, and
efficient enforcement of moment constraints require further analysis.
Extensions to variable thermal speed, and more general nonlinear Fokker--Planck collision operator
are natural directions for future work.

\section*{Acknowledgments}

We would like to express our sincere gratitude to Dr. Kunyu Chen and Prof. Long Zeng for their valuable discussions and helpful suggestions. This work was supported by the National Natural Science Foundation of China (NSFC) under Grant No. 12335014 and Beijing Primal Energy Theory Research Institute.

\section*{Declarations}
Conflict of interest/Competing interests The authors declare that they have no conflict of interest.

\bibliographystyle{elsarticle-num}
\bibliography{Mathbib}

\appendix

% ============================================================
\section{Integrability and moment formulas for King function}
\label{app:moment-formula}
% ============================================================

The normalization factor \(A_{l}(k,\varsigma)\) in
Eq.~\eqref{AlN} was chosen so that the zero-order moment of the isotropic King function equals unity. To justify this choice and to provide a complete set of moment formulas for King mixture constraints, we first determine the integrability conditions under which kinetic moments are well-defined.

%--------------------------------------------------------------------
\subsection{$L^1$-integrability of the King function}
\label{subsec:King-L1}

\begin{proposition}[Global weighted \(L^{1}\)-integrability]
\label{prop:King-L1}
Let \(l\in\mathbb{N}_0\), \(j\in\mathbb{Z}\), and
\(\alpha\in\mathbb{R}\).  The complex parameter \(k\) is taken on the
closed principal branch
\begin{equation}
    -\frac{\pi}{2} < \arg k \le \frac{\pi}{2}
    \quad\text{or}\quad k=0 .
    \label{eq:principal-branch}
\end{equation}
For \(k=0\), the King function is understood in the limiting sense of
Remark~\ref{rem:King-k0}, namely
\[
    \mathcal{K}_{l}(\vh;0,\varsigma)
    =
    \delta_{l,0}\,
    \pi^{-3/2}\varsigma^{-3}e^{-\vh^{2}} .
\]
Then \(\mathcal{K}_{l}(\cdot\,;k,\varsigma)\) belongs to the weighted
space
\[
    L^{1}_{(j,\alpha)}(0,\infty)
    :=
    \bigl\{
        f(\vh):
        \|f\|_{L^{1}_{(j,\alpha)}}<\infty
    \bigr\}, 
    \quad
    \|f\|_{L^{1}_{(j,\alpha)}}
    =
    \int_{0}^{\infty}
      \vh^{\,j+2}e^{\alpha\vh^{2}}|f(\vh)|\,\dd\vh ,
\]
if and only if one of the following mutually exhaustive conditions
holds:
\begin{equation}
\begin{aligned}
    &\bigl(k=0,\ l\ge1\bigr)
    \\[2pt]
    &\quad\lor\;
    \bigl(k=0,\ l=0,\ \alpha<1,\ j>-3\bigr)
    \\[2pt]
    &\quad\lor\;
    \bigl(k\neq0,\ \alpha<1,\ j>-l-3\bigr)
    \\[2pt]
    &\quad\lor\;
    \bigl(k=\ii\kappa,\ \kappa>0,\ \alpha=1,\ -l-3<j<-2\bigr).
\end{aligned}
\label{eq:L1-condition}
\end{equation}
In particular, for \(k=0\) and \(l\ge1\), the King function vanishes
identically and therefore belongs trivially to every
\(L^{1}_{(j,\alpha)}(0,\infty)\).  For \(k=0\) and \(l=0\), it reduces
to a pure Gaussian and is integrable if and only if
\(\alpha<1\) and \(j>-3\).
\end{proposition}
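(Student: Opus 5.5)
The plan is to reduce the question to the local behaviour of the integrand $\vh^{\,j+2}e^{\alpha\vh^{2}}|\mathcal K_l(\vh;k,\varsigma)|$ at the two endpoints $\vh\to0^{+}$ and $\vh\to\infty$. On any compact subinterval of $(0,\infty)$ the integrand is continuous, so membership in $L^{1}_{(j,\alpha)}$ is equivalent to integrability near $0$ together with integrability near $\infty$. The prefactor $A_l(k,\varsigma)$ is a nonzero constant, so it can be dropped. The case $k=0$ is immediate from the limiting convention of Remark~\ref{rem:King-k0}. For $l\ge1$ the function is identically zero. For $l=0$ the integrand is $\vh^{j+2}e^{(\alpha-1)\vh^{2}}$, which is integrable at $0$ iff $j+2>-1$ and at $\infty$ iff $\alpha<1$ (for $\alpha=1$ it is a pure power and cannot be integrable at both ends).

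For $k\neq0$, the origin is handled by the small-argument asymptotics \eqref{eq:King-k0-asymptotics}. Since $i_l(k\vh)\sim(k\vh)^l/(2l+1)!!$ with $k^l\neq0$, the integrand behaves like $c\,\vh^{\,j+l+2}$ near $0$. This is integrable iff $j>-l-3$, and that gives the lower bound on $j$ appearing in both $k\neq0$ branches of \eqref{eq:L1-condition}.

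At infinity I would split according to the branch \eqref{eq:principal-branch}, using the large-argument asymptotics of $i_l$ from \cite{NIST2010}: $i_l(z)=\frac{e^{z}}{2z}\bigl(1+O(1/z)\bigr)+(-1)^{l+1}\frac{e^{-z}}{2z}\bigl(1+O(1/z)\bigr)$. For $\Re k>0$ this gives $|i_l(k\vh)|\asymp e^{\Re k\,\vh}/\vh$. The integrand is then $\asymp \vh^{j+1}e^{(\alpha-1)\vh^{2}+\Re k\,\vh}$, which is integrable iff $\alpha<1$, since for $\alpha=1$ the linear exponential dominates and for $\alpha>1$ the quadratic one does. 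For $k=\ii\kappa$ with $\kappa>0$, the relation $i_l(\ii\kappa\vh)=\ii^{\,l}j_l(\kappa\vh)$ reduces the question to $j_l(\kappa\vh)=\frac{\sin(\kappa\vh-l\pi/2)}{\kappa\vh}+O(\vh^{-2})$. If $\alpha<1$ the Gaussian kills everything. If $\alpha>1$ the integral diverges. If $\alpha=1$ the integrand is $\vh^{j+2}|j_l(\kappa\vh)|$.

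The main obstacle is this borderline case $\alpha=1$. There I need a two-sided estimate for $\int^\infty\vh^{j+1}|\sin(\kappa\vh-l\pi/2)+O(1/\vh)|\,\dd\vh$. The upper bound is trivial: the integral converges if $j+1<-1$, i.e.\ $j<-2$. For the lower bound I would restrict to the set where $|\sin(\kappa\vh-l\pi/2)|\ge1/2$, which has positive density in each period, and where the $O(1/\vh)$ correction is eventually smaller than $1/4$. This shows the integral diverges when $j\ge-2$. Combining this with the origin condition $j>-l-3$ gives exactly $-l-3<j<-2$.

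Finally I would check that the four cases exhaust the possibilities and are mutually exclusive. The branch restriction ensures that either $\Re k>0$ or $k\in\ii\mathbb{R}_{>0}$ whenever $k\neq0$; the value $\arg k=-\pi/2$ is excluded, and it would give the same result anyway, since $j_l$ has definite parity. The case analysis above therefore yields the stated ``if and only if''.
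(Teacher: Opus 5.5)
Your proposal is correct and follows essentially the same route as the paper. Both drop the nonzero prefactor, treat \(k=0\) via the limiting convention, and obtain \(j>-l-3\) from the small-argument asymptotics at the origin. Both then split the behaviour at infinity into \(\Re(k)>0\), with growth \(e^{\Re(k)\vh}/\vh\), and \(k=\ii\kappa\), with oscillatory \(j_l\) decay. For the borderline case \(\alpha=1\), you restrict to the positive-density set where \(|\sin(\kappa\vh-l\pi/2)|\ge 1/2\) and absorb the \(O(\vh^{-1})\) correction; this is a more explicit version of the paper's ``nonzero mean over each period'' argument.
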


\begin{proof}
The prefactor \(A_l(k,\varsigma)\) is independent of \(\vh\).  For
\(k\neq0\), it is nonzero and therefore does not affect weighted
\(L^1\)-integrability.  For \(k=0\), we use the limiting convention in
Remark~\ref{rem:King-k0}.  Hence it suffices to analyze the radial
factor \(e^{-\vh^{2}}i_l(k\vh)\), with the convention that it equals
\(e^{-\vh^{2}}\) when \(k=0,l=0\), and equals zero when \(k=0,l\ge1\).

\medskip
\noindent\textbf{Case \(k=0\).}
When \(l\ge1\), \(\mathcal K_l(\cdot;0,\varsigma)\equiv0\), and the
statement is trivial.  When \(l=0\),
\(
    |\mathcal K_0(\vh;0,\varsigma)|\propto e^{-\vh^{2}} .
\)
Thus the weighted integrand is
\[
    \vh^{\,j+2}e^{\alpha\vh^{2}}e^{-\vh^{2}}
    =
    \vh^{\,j+2}e^{(\alpha-1)\vh^{2}} .
\]
Near the origin, it behaves as \(\vh^{\,j+2}\), so convergence requires
\(j>-3\).  At infinity, Gaussian decay gives convergence if
\(\alpha<1\); if \(\alpha=1\), convergence would require \(j<-3\),
which contradicts the origin condition; and if \(\alpha>1\), the
integral diverges.  Hence, for \(k=0,l=0\), integrability holds if and
only if
\(
    \alpha<1,
    \
    j>-3 .
\)

\medskip
\noindent\textbf{Case \(k\neq0\).}
On the closed principal branch, either \(\Re(k)>0\), or
\(k=\ii\kappa\) with \(\kappa>0\).  The relevant asymptotic behavior\cite[\S10.49,\S10.52]{NIST2010} is
\[
    |e^{-\vh^{2}} i_l(k\vh)|
    =
    \begin{cases}
        O(\vh^{\,l}), & \vh\to0^{+}, \\[4pt]
        \displaystyle
        O\!\left(\vh^{-1}e^{-\vh^{2}+\Re(k)\vh}\right),
        & \vh\to+\infty,\quad \Re(k)>0, \\[8pt]
        \displaystyle
        e^{-\vh^{2}}
        \left|
            \frac{\sin(\kappa\vh-l\pi/2)}{\kappa\vh}
            +O(\vh^{-2})
        \right|,
        & \vh\to+\infty,\quad k=\ii\kappa,\ \kappa>0 .
    \end{cases}
\]

\noindent\textbf{Origin \(\vh\to0^{+}\).}
Using
\(\
    i_l(k\vh)
    \sim
    {(k\vh)^l}/{(2l+1)!!},
    \ \vh\to0^{+},
\)
the weighted integrand behaves as
\[
    \vh^{\,j+2}e^{\alpha\vh^{2}}
    |e^{-\vh^{2}}i_l(k\vh)|
    \asymp
    \vh^{\,j+l+2}.
\]
Therefore convergence near zero is equivalent to
\begin{equation}
    j+l+2>-1,
    \qquad\text{i.e.}\qquad
    j>-l-3 .
    \label{eq:origin-cond}
\end{equation}

\noindent\textbf{Infinity \(\vh\to+\infty\).}
We distinguish the two possible nonzero branches.

\begin{itemize}
    \item \(\Re(k)>0\).  
    Since
    \(
        i_l(k\vh)
        \sim
        {e^{k\vh}}/{(2 k \vh)},
        \ \vh\to+\infty,
    \)
    the weighted integrand behaves as
    \(
        \vh^{\,j+1}
        \exp\!\bigl((\alpha-1)\vh^{2}+\Re(k)\vh\bigr).
    \)
    Hence convergence at infinity holds if and only if \(\alpha<1\).
    Combining with \eqref{eq:origin-cond}, we obtain
    \[
        \alpha<1,
        \qquad
        j>-l-3 .
    \]

    \item \(k=\ii\kappa,\ \kappa>0\).  
    Since
    \[
        i_l(\ii\kappa\vh)=\ii^{\,l}j_l(\kappa\vh),
        \qquad
        j_l(r)=\frac{\sin(r-l\pi/2)}{r}+O(r^{-2}),
    \]
    the weighted integrand has the form
    \(
        \vh^{\,j+1}e^{(\alpha-1)\vh^{2}}
        \bigl|\sin(\kappa\vh-l\pi/2)+O(\vh^{-1})\bigr| .
    \)
    If \(\alpha<1\), the Gaussian decay gives convergence for every
    \(j\) satisfying \eqref{eq:origin-cond}.  If \(\alpha>1\), the
    Gaussian growth gives divergence.  If \(\alpha=1\), absolute
    convergence is equivalent to
    \(
        j<-2,
    \)
    because \(|\sin(\kappa\vh-l\pi/2)|\) has nonzero mean over each
    period.  Combining with \eqref{eq:origin-cond}, the borderline
    imaginary case gives
    \(
        -l-3<j<-2 .
    \)
\end{itemize}

\noindent\textbf{Conclusion.}
Combining the zero-parameter case with the two nonzero branches, we obtain precisely the  condition stated in \eqref{eq:L1-condition}.
\end{proof}

%--------------------------------------------------------------------
\subsection{Kinetic moments and their closed-form expression}
\label{subsec:moment}

For physical kinetic moments\cite{}, only the case
\(\alpha = 0\) is relevant. The condition \eqref{eq:L1-condition} reduces to
\(j > -l-3\).

\begin{definition}[Kinetic moment]
\label{def:moment}
For a radial amplitude \(f_{l}^{m}(\vh)\), the
\((j,l,m)^{th}\)-order kinetic moment \(\mathcal{M}_{j,l}^{m} \) is a functional of \(f_{l}^{m}(\vh)\), namely
\begin{equation}
    \mathcal{M}_{j,l}^{m} =
    \mathcal{M}_{j}[f_{l}^{m}]
    := 4 \pi m_{a} n_{a}
       \varsigma^{j+3}
       \int_{0}^{\infty}
       \vh^{\,j+2}\, f_{l}^{m}(\vh)\,\dd\vh,
    \qquad j > -l-3 .
    \label{eq:moment-def}
\end{equation}
Here \(m_{a}\) and \(n_{a}\) denote the particle mass and number
density.  The moments \(\mathcal{M}_{0,0}^{0}\), \(\mathcal{M}_{1,1}^{m}\)
and \(\mathcal{M}_{2,0}^{0}\) are related to the mass density, momentum flux, and energy density, respectively.
\end{definition}

\begin{remark}
    The condition \(j > -l-3\) is precisely the integrability criterion of Proposition~\ref{prop:King-L1} with \(\alpha = 0\).  For a single King function it is necessary and sufficient; for finite KMM (Eq. \eqref{eq:KMM-flm}) it guarantees termwise integrability.
\end{remark}

\begin{proposition}[Moment formula for the King function]
\label{prop:moment}
Let \(\xi = k/2\) and \(s_{j,l} = (j+l+3)/2\).
For every \(l \in \mathbb{N}_{0}\), \(m = -l,-l+1,-l+2,\cdots,l\) and \(j > -l-3\), 
set
\begin{equation}
  f_l^m (\vh) =
        \mathcal{K}_{l}(\vh;k,\varsigma)
        \frac{(l-m)!}{(l+m)!}\,
        \overline{\widetilde Y_l^m(\theta_{k},\phi_{k})} .
        \label{eq:flm-Klm}
\end{equation}
the \((j,l,m)\)-th kinetic moment
\(\mathcal{M}_{j,l}^{m}\) is given by
\begin{equation}
    \mathcal{M}_{j,l}^{m} 
    = 
    (2l+1)
      \Gamma(s_{j,l})\;
    m_{a} n_{a}\,
      \varsigma^{\,j}\, \xi^{\,l}\,
      {}_{1}\widetilde{F}_{1}\!
      \left(
          \frac{l-j}{2},\; l+\frac32,\; -\xi^{2}
      \right) 
      \frac{(l-m)!}{(l+m)!}\,
      \overline{\widetilde Y_l^m(\theta_{k},\phi_{k})} ,
    \label{eq:moment-Klm}
\end{equation}
where
\(
{}_{1}\widetilde{F}_{1}(a,b,z)
 = {}_{1}F_{1}(a;b;z)/\Gamma(b)
\) is the regularized confluent hypergeometric function.
In particular, the mass density moment normalizes to unity:
\[
    {\mathcal{M}_{0,0}^{0}}/{(m_{a} n_{a})} = 1 .
\]
\end{proposition}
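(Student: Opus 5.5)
The plan is to compute the radial integral in \eqref{eq:moment-def} directly by termwise integration of the Taylor series of \(i_l\), identify the result as a confluent hypergeometric series, and then absorb the Gaussian factor \(e^{-\xi^2}\) from \(A_l\) by Kummer's transformation. The angular factor \(\frac{(l-m)!}{(l+m)!}\overline{\widetilde Y_l^m(\theta_k,\phi_k)}\) in \eqref{eq:flm-Klm} does not depend on \(\vh\). It therefore factors out of \(\mathcal M_j\), and it suffices to evaluate
\[
    J_{j,l}(k):=\int_0^\infty \vh^{\,j+2}e^{-\vh^2}i_l(k\vh)\dd\vh .
\]

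\textbf{Step 1: termwise integration.} I insert the expansion \(i_l(w)=w^l\sum_{n\ge0}a_{n,l}w^{2n}\) with \(a_{n,l}\) from \eqref{eq:anl}, as in Lemma~\ref{lem:F-entire}. To justify exchanging sum and integral, I use the bound \(\sum_n|a_{n,l}||k|^{2n+l}\vh^{2n+l}\le C(1+\vh)^l e^{|k|\vh}\), already proved in Lemma~\ref{lem:F-entire}. Multiplied by \(\vh^{j+2}e^{-\vh^2}\), this majorant is integrable at infinity because of the Gaussian. At the origin the dominant term is \(\vh^{j+l+2}\), which is integrable exactly when \(j>-l-3\), the condition of Proposition~\ref{prop:King-L1} with \(\alpha=0\). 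Tonelli/Fubini then gives
\[
    J_{j,l}(k)=\sum_{n\ge0}a_{n,l}k^{2n+l}\int_0^\infty \vh^{2n+j+l+2}e^{-\vh^2}\dd\vh
    =\sum_{n\ge0}a_{n,l}k^{2n+l}\,\tfrac12\Gamma(s_{j,l}+n).
\]

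\textbf{Step 2: hypergeometric identification.} Writing \(k=2\xi\), one has \(a_{n,l}k^{2n+l}=\sqrt\pi\,\xi^{2n+l}/(2\,n!\,\Gamma(n+l+3/2))\). Using \(\Gamma(s+n)=\Gamma(s)(s)_n\) and \(\Gamma(n+b)=\Gamma(b)(b)_n\) with \(b=l+3/2\), I obtain
\[
    J_{j,l}(k)=\frac{\sqrt\pi}{4}\,\Gamma(s_{j,l})\,\xi^l\,{}_1\widetilde F_1\bigl(s_{j,l};\,l+\tfrac32;\,\xi^2\bigr).
\]

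\textbf{Step 3: Kummer and constants.} The prefactor \(A_l\) contributes \(e^{-\xi^2}\). Kummer's relation \(e^{-z}{}_1F_1(a;b;z)={}_1F_1(b-a;b;-z)\), with \(b-a=l+\tfrac32-\tfrac{j+l+3}{2}=\tfrac{l-j}{2}\), turns the product into \({}_1\widetilde F_1(\tfrac{l-j}{2};l+\tfrac32;-\xi^2)\). For the constants, \(4\pi m_an_a\varsigma^{j+3}\) times \((2l+1)\pi^{-3/2}\varsigma^{-3}\) times \(\sqrt\pi/4\) equals \((2l+1)m_an_a\varsigma^j\). This yields \eqref{eq:moment-Klm}.

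\textbf{Normalization.} For \(j=l=m=0\) we have \(s_{0,0}=3/2\) and \({}_1F_1(0;3/2;\cdot)=1\). Hence \({}_1\widetilde F_1=1/\Gamma(3/2)\), which cancels \(\Gamma(s_{0,0})\), and the angular factor is \(1\). So \(\mathcal M_{0,0}^0/(m_an_a)=1\).

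The main point needing care is Step 1, the interchange of sum and integral near the origin for negative \(j\) and for complex \(k\). I would handle it with the absolute majorant above, valid for all complex \(k\) since it depends only on \(|k|\). Step 3 then holds for all \(k\) on the principal branch by the analyticity of both sides in \(k\).
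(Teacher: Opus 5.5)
Your proposal is correct and follows the paper's proof step for step. You factor out the angular constant, evaluate the radial integral as $\frac{\sqrt{\pi}}{4}\Gamma(s_{j,l})\,\xi^{l}\,{}_1\widetilde F_1(s_{j,l};l+\tfrac32;\xi^2)$, absorb $e^{-\xi^2}$ by Kummer's transformation and collect constants. The only difference is that you derive the Gaussian--Bessel integral by termwise integration of the Taylor series of $i_l$, whereas the paper cites the tabulated formula for $\int_0^\infty x^{\mu-1}e^{-\alpha x^2}I_\nu(\beta x)\dd x$; your route avoids half-integer powers of $k$ and holds for all complex $k$ at once, so your closing analyticity argument is unnecessary.

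One repair is needed in Step 1. The majorant $C(1+\vh)^l e^{|k|\vh}$ you borrow from Lemma~\ref{lem:F-entire} drops the factor $\vh^{l}$ at the origin, so it is not integrable against $\vh^{j+2}$ when $l\ge1$ and $-l-3<j\le -3$. This contradicts your own remark that the integrand behaves like $\vh^{j+l+2}$ there. Use instead the exact majorant $\sum_n a_{n,l}|k|^{2n+l}\vh^{2n+l}=i_l(|k|\vh)=O(\vh^{l})$. Equivalently, apply Tonelli to the nonnegative series, which gives the finite sum $\sum_n a_{n,l}|k|^{2n+l}\tfrac12\Gamma(n+s_{j,l})$.
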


\begin{proof}
Using \(i_{l}(z) = \sqrt{\pi/(2z)}\,I_{l+1/2}(z)\) and the standard integral formula for \(I_{\nu}\)
(see, e.g., \cite{Watson1944, GradshteynRyzhik2014}),
\[
\int_0^\infty x^{\mu-1} e^{-\alpha x^2} I_\nu(\beta x)\,\mathrm{d}x
= \frac{\beta^\nu}{2^{\nu+1}\alpha^{(\mu+\nu)/2}}
   \frac{\Gamma\bigl(\frac{\mu+\nu}{2}\bigr)}{\Gamma(\nu+1)}
   {}_1F_1\!\left(\frac{\mu+\nu}{2};\,\nu+1;\,\frac{\beta^2}{4\alpha}\right).
\]
Setting \(\mu = j+\frac52\), \(\nu = l+\frac12\), \(\alpha=1\), and
\(\beta=k\) gives
\[
\int_0^\infty \vh^{\,j+3/2} e^{-\vh^2} I_{l+1/2}(k\vh)\,\mathrm{d}\vh
= \frac{k^{\,l+1/2}}{2^{\,l+3/2}}
   \frac{\Gamma(s_{j,l})}{\Gamma(l+\frac32)}
   {}_1F_1\!\left(s_{j,l};\,l+\tfrac32;\,\frac{k^2}{4}\right)~.
\]
Set  \(\xi = k/2\), hence
\begin{align*}
    \int_{0}^{\infty}
    \vh^{\,j+2}\, e^{-\vh^{2}}\,
    i_{l}(k\vh)\,\dd\vh
    % &= \sqrt{\frac{\pi}{2k}}\,
    %    \frac{k^{\,l+1/2}}{2^{\,l+3/2}}\,
    %    \Gamma(s_{j,l})
    %    {}_1\widetilde{F}_{1}\!\left(
    %        s_{j,l},\, l+\tfrac32,\,
    %        \frac{k^{2}}{4}\right) 
    %        \\
    &= \frac{\sqrt{\pi}}{4}\,
       \xi^{\,l}\,
       \Gamma(s_{j,l})\;
       {}_{1}\widetilde{F}_{1}\!\left(
           s_{j,l},\, l+\tfrac32,\,
           \xi^{2}\right)~.
\end{align*}
Substituting Eq. \eqref{eq:flm-Klm} into the definition of kinetic moment \eqref{eq:moment-def}, applying the above equation, the normalization factor
\(A_{l}(k,\varsigma)\) (Eq. \eqref{AlN}) and the Kummer transformation \cite[\S13.2]{NIST2010}
\(e^{-z}\,{}_{1}\widetilde{F}_{1}(a,b,z)
 = {}_{1}\widetilde{F}_{1}(b-a,b,-z)\)
yields  Eq. \eqref{eq:moment-Klm}.
The case \(j=l=0\) follows from
\(s_{0,0} = 3/2\) and
\({}_{1}\widetilde{F}_{1}(0,3/2,-\xi^{2}) = 1/\Gamma(3/2)\).
\end{proof}

\begin{remark}[The \(k=0\) limit and the \(m=0\) mode]
The case \(k=0\) follows by taking the limit \(k\to0\), consistently
with the convention in Remark~\ref{rem:King-k0}.
When \(m=0\), the angular factor simplifies to
\[
    \frac{(l-m)!}{(l+m)!}\,
    \overline{\widetilde Y_l^m(\theta_{k},\phi_{k})}
    = P_l(\cos\theta_k),
\]
so that \(f_l^0(\vh) = \mathcal{K}_{l}(\vh;k,\varsigma)\,P_l(\cos\theta_k)\).
For this mode, the \((j,l)\)-th kinetic moment
\(\mathcal{M}_{j,l} := \mathcal{M}_{j}[f_l^0]\) evaluates to
\begin{equation}
    \mathcal{M}_{j,l} 
    =
    (2l+1)\,
    \Gamma(s_{j,l})\,
    m_{a} n_{a}\,
    \varsigma^{\,j}\,
    \xi^{\,l}\,
    {}_{1}\widetilde{F}_{1}\!
    \left(
        \frac{l-j}{2},\; l+\frac32,\; -\xi^{2}
    \right)
    P_l(\cos\theta_k) ,
    \label{eq:moment-Kl}
\end{equation}
where \(s_{j,l} = (j+l+3)/2\). 
\end{remark}

For a finite KMM,
\(
    f_{l}^{m}(\vh) =
      \sum_{s} \frak{n}_{s}\,
      \mathcal{K}_{l}(\vh;k_{s},\varsigma)\,
      \frac{(l-m)!}{(l+m)!}\,
      \overline{\widetilde Y_l^m(\theta_{k_{s}},\phi_{k_{s}})}
 \), 
 linear superposition of \eqref{eq:moment-Klm} over the mixture components immediately yields the following.

\begin{corollary}[Moment formula for KMM]
\label{cor:moment-mixture}
Let \(\xi_s = k_{s}/ 2 \) and
\(s_{j,l} = (j+l+3)/2\).  For every \(l \in \mathbb{N}_{0}\) and
\(j > -l-3\),
the \((j,l,m)^{th} \)-order kinetic moment \(\mathcal{M}_{j,l}^{m}\) of KMM can be expressed as
\begin{equation}
    \mathcal{M}_{j,l}^{m}
    = 
      (2l+1)
      \Gamma(s_{j,l})
    m_{a} n_{a}\,
      \varsigma^{\,j}\,
      \sum_{s=1}^{S}
      \frak{n}_{s}\,
      \Bigl(\xi_s \Bigr)^{\,l}\,
      {}_{1}\widetilde{F}_{1}
      \left[
          \frac{l-j}{2},\; l+\frac32,\; - \Bigl(\xi_s \Bigr)^{2}
      \right]
      \frac{(l-m)!}{(l+m)!}\,
      \overline{\widetilde Y_l^m(\theta_{k_{s}},\phi_{k_{s}})} .
    \label{eq:moment-mixture}
\end{equation}
In particular,  for the
mass density moment where \(j=l=m=0\), Eq.~\eqref{eq:moment-mixture} reduces
to
\begin{equation}
    \frac{\mathcal{M}_{0,0}^{0}}{m_{a} n_{a}}
    = \sum_{s=1}^{S} \mathfrak{n}_{s}.
    \label{eq:moment-mixture-M000}
\end{equation}
Physically, the total number density should be
normalized by \(\sum_{s=1}^{S} \mathfrak{n}_{s} = 1\).
\end{corollary}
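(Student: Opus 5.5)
The corollary follows by linearity from Proposition~\ref{prop:moment}. The plan is to exchange the integral with the finite sum and then apply the single-component moment formula to each King component.

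\textbf{Step 1: exchange sum and integral.} Insert the finite mixture
\(f_l^m=\sum_{s=1}^S \mathfrak n_s\,\mathcal K_l(\cdot;k_s,\varsigma)\,\frac{(l-m)!}{(l+m)!}\overline{\widetilde Y_l^m(\theta_{k_s},\phi_{k_s})}\)
into the definition \eqref{eq:moment-def}. For \(k_s>0\) and \(j>-l-3\), Proposition~\ref{prop:King-L1} with \(\alpha=0\) shows that each summand lies in \(L^1_{(j,0)}(0,\infty)\). The functional \(\mathcal M_j[\cdot]\) is therefore well defined on each term. Because the sum is finite, \(\mathcal M_j\) is linear over it, and no limiting argument is needed.

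\textbf{Step 2: evaluate each component.} Apply Proposition~\ref{prop:moment} to the \(s\)-th summand, with \(k=k_s\), \(\xi=\xi_s\) and angular factor \(\frac{(l-m)!}{(l+m)!}\overline{\widetilde Y_l^m(\theta_{k_s},\phi_{k_s})}\). This gives \(\mathfrak n_s\) times the right-hand side of \eqref{eq:moment-Klm}. The factors \((2l+1)\Gamma(s_{j,l})m_an_a\varsigma^{j}\) do not depend on \(s\), so they can be pulled out of the sum. The result is exactly \eqref{eq:moment-mixture}.

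\textbf{Step 3: the mass moment.} Set \(j=l=m=0\). Then \(s_{0,0}=3/2\), and \(\widetilde Y_0^0\equiv1\), so the angular factor equals \(1\). Also \(\xi_s^0=1\), and \({}_1\widetilde F_1(0,3/2,-\xi_s^2)=1/\Gamma(3/2)\). This value cancels the prefactor \(\Gamma(3/2)\), and each summand reduces to \(\mathfrak n_s\). This yields \eqref{eq:moment-mixture-M000}, which agrees with the normalization \(\mathcal M_{0,0}^0/(m_an_a)=1\) for a single component.

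\textbf{Main obstacle.} The argument is a direct consequence of the earlier results, so the only real care is bookkeeping in Step 2. One must keep the angular factor attached to each component and confirm that the constants factored out are genuinely independent of \(s\).
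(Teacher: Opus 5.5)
Your proposal is correct and follows the paper's route: the paper obtains the corollary as the finite linear superposition of Proposition~\ref{prop:moment} over the mixture components, and the $j=l=m=0$ case reduces via ${}_1\widetilde F_1(0,3/2,-\xi_s^2)=1/\Gamma(3/2)$, exactly as you describe. Your explicit check of termwise integrability through Proposition~\ref{prop:King-L1} with $\alpha=0$ is a small but useful addition that the paper leaves implicit.
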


\begin{remark}[Algebraic tractability of moment constraints]
    The linear structure of \eqref{eq:moment-mixture} is the key property that renders moment constraints algebraically tractable within the KMM framework.  In particular, Corollary~\ref{cor:moment-mixture} confirms that the normalization factor \eqref{AlN} preserves the unit zero-order moment.
\end{remark}

\end{document}